\documentclass[english]{article}
\usepackage[T1]{fontenc}
\usepackage[latin9]{inputenc}
\usepackage{geometry}
\geometry{verbose,tmargin=3cm,bmargin=2cm,lmargin=1.5cm,rmargin=1.5cm}
\usepackage{array}
\usepackage{float}
\usepackage{amsthm}
\usepackage{amsmath}
\usepackage{graphicx}

\makeatletter

\providecommand{\tabularnewline}{\\}

\theoremstyle{plain}
\newtheorem{thm}{\protect\theoremname}
  \theoremstyle{definition}
  \newtheorem{defn}[thm]{\protect\definitionname}
  \theoremstyle{remark}
  \newtheorem{rem}[thm]{\protect\remarkname}

%
%
\usepackage{times}
\usepackage{txfonts}

\usepackage{babel}
\providecommand{\definitionname}{Definition}
  \providecommand{\remarkname}{Remark}
\providecommand{\theoremname}{Theorem}

\usepackage{babel}
\providecommand{\definitionname}{Definition}
  \providecommand{\remarkname}{Remark}
\providecommand{\theoremname}{Theorem}

\makeatother

\usepackage{babel}
  \providecommand{\definitionname}{Definition}
  \providecommand{\remarkname}{Remark}
\providecommand{\theoremname}{Theorem}

\begin{document}

\title{Core-satellite Graphs: Clustering, Assortativity and Spectral Properties}

\author{Ernesto Estrada\footnotemark[2] \and Michele Benzi\footnotemark[3]}

\maketitle
\footnotetext[2]{Department of Mathematics \& Statistics, University
of Strathclyde, 26 Richmond Street, Glasgow G1 1HX, UK (ernesto.estrada@strath.ac.uk).}
\footnotetext[3]{Department of Mathematics and Computer Science,
Emory University, Atlanta, Georgia 30322, USA (benzi@mathcs.emory.edu).
}

\begin{abstract}
Core-satellite graphs (sometimes referred to as generalized friendship
graphs) are an interesting class of graphs that generalize many well
known types of graphs. In this paper we show that two popular clustering
measures, the average Watts-Strogatz clustering coefficient and the
transitivity index, diverge when the graph size increases. We also
show that these graphs are disassortative. In addition, we completely
describe the spectrum of the adjacency and Laplacian matrices associated
with core-satellite graphs. Finally, we introduce the class of generalized
core-satellite graphs and analyze their clustering, assortativity, and
spectral properties. 
\end{abstract}

\section{Introduction}

The availability of data about large real-world networked systems---commonly
known as complex networks---has demanded the development of several
graph-theoretic and algebraic methods to study the structure and dynamical
properties of such usually giant graphs \cite{EstradaBook,NewmanReview,LucianoReview}.
In a seminal paper, Watts and Strogatz \cite{WS} introduced one of
such graph-theoretic indices to characterize the transitivity of relations
in complex networks. The so-called clustering coefficient represents
the ratio of the number of triangles in which the corresponding node
takes place to the the number of potential triangles involving that
node (see further for a formal definition). The clustering coefficient
of a node is bounded between zero and one, with values close to zero
indicating that the relative number of transitive relations involving
that node is low. On the other hand, a clustering coefficient close
to one indicates that this node is involved in as many transitive
relations as possible. When studying complex real-world networks it
is very common to report the average Watts-Strogatz (WS) clustering
coefficient $\bar{C}$ as a characterization of how globally clustered
a network is \cite{EstradaBook,NewmanReview}. Such idea, however,
was not new as reflected by the fact that Luce and Perry \cite{Luce and Perry}
had proposed 50 years earlier an index to account for the network
transitivity, given by the total number of triangles in the graph
divided by the total number of triads existing in the graph. Such
index, hereafter called the graph transitivity $C$, was then rediscovered
by Newman \cite{Newman clustering} in the context of complex networks.
Here again this index is bounded between zero and one, with small
values indicating poor transitivity and values close to one indicating
a large one (see also \cite{Wasserman_Faust}).

It was first noticed by Bollob\'as \cite{Bollobas} and then by Estrada
\cite{EstradaBook} that there are certain graphs for which the two
clustering coefficients diverge. That is, there are classes of graphs
for which the WS average clustering coefficient tends to one while
the graph transitivity tends to zero as the size of the graphs grows
to infinity. The two families indentified by Bollob\'as \cite{Bollobas}
and by Estrada \cite{EstradaBook} are illustrated in Figure \ref{examples}
and they correspond to the so-called \textit{friendship} (or Dutch
windmill or $n$-fan) \textit{graphs} \cite{friendship theorem} and
\textit{agave graphs} \cite{Agave graphs}, respectively. The friendship
graphs are formed by glueing together $\eta$ copies of a triangle
in a common vertex. The agave graphs are formed by connecting $s$
disjoint nodes to both nodes of a complete graph $K_{2}$.

The friendship graphs are members of a larger family of graphs known
as the \textit{windmill graphs} \cite{windmill graphs,windmill spectra_1,windmill spectra_2,windmill spectra_3,windmill spectra_4}.
A windmill graph $W\left(\eta,s\right)$ consists of $\eta$ copies
of the complete graph $K_{s}$ \cite{windmill graphs} with every
node connected to a common one (see Figure \ref{windmill graphs}).
In a recent paper Estrada \cite{clustering divergence} has proved
that the divergence of the two clustering coefficient indices also
takes place for windmill graphs when the number of nodes tends to
infinity. The agave graphs belong to the general class of complete
split graphs, which are the graphs consisting of a central clique
and a disjoint set of nodes which are connected to all the nodes of
the clique \cite{split graphs}.

\begin{figure}[H]
\begin{centering}
\includegraphics[width=0.75\textwidth]{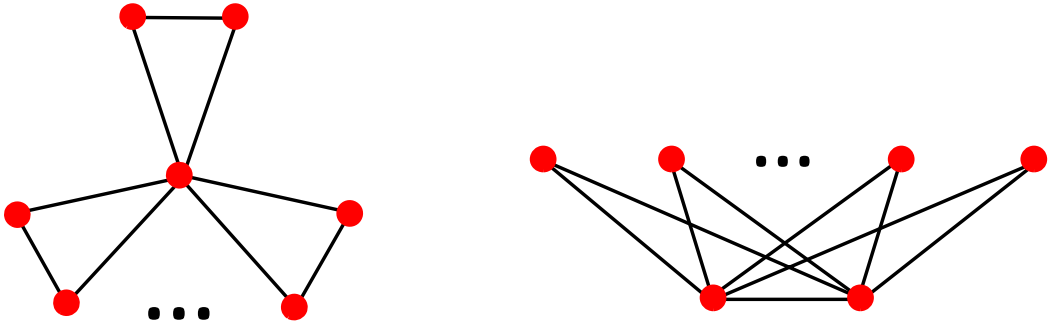} 
\par\end{centering}

\protect\protect\protect\caption{Examples of graphs for which the average WS clustering coefficient
and the graph transitivity diverge as the size of the graphs goes
to infinity.}

\label{examples} 
\end{figure}

It is important to note that this divergence between the two clustering
coefficients also takes place in certain real-world networks \cite{clustering divergence}.
This phenomenon has passed inadverted among the plethora of papers
that deals with clustering in real-world networks. In Table \ref{Table1}
we illustrate some results collected from the literature in which
the divergence of the two clustering coefficients is observed. The
dataset is selected to intentionally display the heterogeneity of
classes of real-world networks which display such phenomenon. They
correspond to the interchange of emails among the employees of the
ENRON corporation, a CAIDA version of the Internet as Autonomous System,
a metabolic network, a social network of Hollywood actors and a network
of coauthors in Biology.

\begin{table}[H]
\begin{centering}
\begin{tabular}{>{\centering}p{2cm}>{\centering}p{1.7cm}>{\centering}p{1.8cm}>{\centering}p{1cm}>{\centering}p{1cm}}
\hline 
 & $n$  & $m$  & $\bar{C}$  & $C$\tabularnewline
\hline 
E-mail  & 36 692  & 183 831  & 0.50  & 0.09\tabularnewline
Internet  & 26 475  & 106 762  & 0.21  & 0.007\tabularnewline
metabolic  & 765  & 3 686  & 0.67  & 0.09\tabularnewline
Actors  & 449 913  & 25 516 482  & 0.78  & 0.20\tabularnewline
Coauthors  & 1 520 251  & 11 803 064  & 0.60  & 0.09\tabularnewline
\hline 
\end{tabular}
\par\end{centering}

\protect\protect\protect\caption{Some real-world networks collected from the literature displaying
divergence of the average WS clustering coefficient and the graph
transitivity. Here $n$ and $m$ denote the number of nodes and edges
in the various graphs. }

\label{Table1} 
\end{table}

In this work we study a class of graphs which accounts for the clustering
divergence phenomenon in networks with degree disassortativity. These
graphs, which we call {\em core-satellite graphs}, are formed by
a central clique (the core) connected to several other cliques (the
satellites) in such a way that they generalize both the complete split
and the windmill graphs. We prove here that the clustering coefficients
of these graphs diverge and that they are always disassortative. We
also characterize completely the eigenstructure of the adjacency and
the Laplacian matrices of these graphs, and comment on the asymptotic
behavior of such quantities as the infection threshold and synchronizability
index of these graphs. Finally, we provide evidence that supports
the idea of using the core-satellite graphs as models of certain classes
of real-world networks.

\section{Preliminaries}

Here we always consider \textit{simple} \textit{undirected and connected
graphs} $G=(V,E)$ formed by a set of $n$ nodes (vertices) $V$ and
a set of $m$ edges $E=\{(u,v)\,|\,u,v\in V\}$ between the nodes.
Let us now recall an important graph operation which will be helpful
in this work. 
\begin{defn}
The \textit{join} (or \textit{complete product}) $G_{1}\nabla G_{2}$
of graphs $G_{1}$ and $G_{2}$ is the graph obtained from $G_{1}\cup G_{2}$
by joining every vertex of $G_{1}$ with every vertex of $G_{2}$. 
\end{defn}
Let us now define some specific classes of graphs which will be considered
in this work. As usual, we denote with $\bar{G}$ the complement of
a graph $G$.
\begin{defn}
The complete split graph is $\Sigma\left(a,b\right)\cong K_{a}\nabla\bar{K}_{b}$.
The special cases $\Sigma\left(1,b\right)\cong K_{2}\nabla\bar{K}_{b}$
and $\Sigma\left(2,b\right)\cong K_{2}\nabla\bar{K}_{b}$ are known
as the star and agave graphs, respectively. 
\end{defn}
In Figure \ref{complete split } we show some examples of complete
split graphs.

\begin{figure}[H]
\begin{centering}
\includegraphics[width=0.75\textwidth]{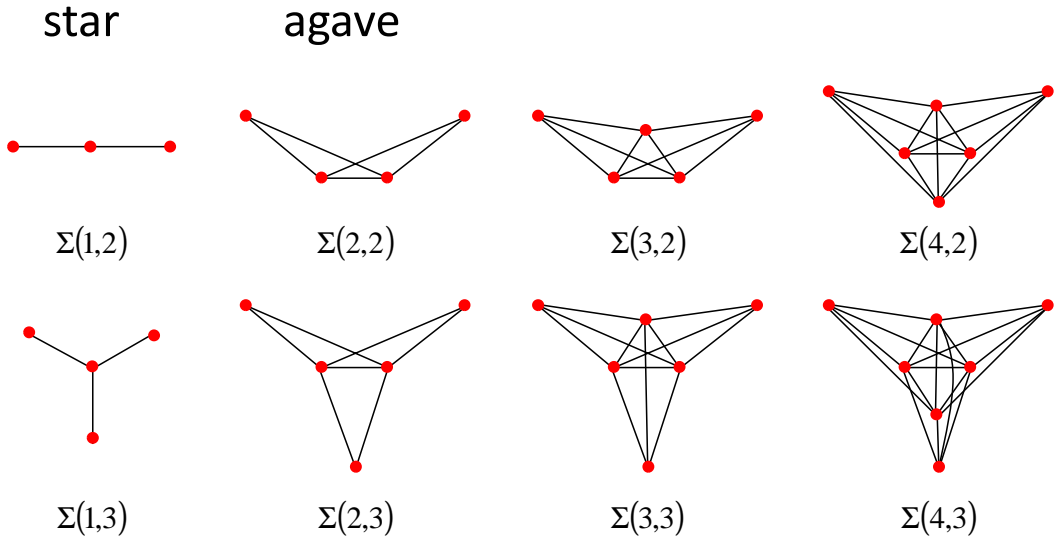} 
\par\end{centering}

\protect\protect\protect\caption{Illustration of some small complete split graphs including the classes
of star and agave graphs.}

\label{complete split } 
\end{figure}

\begin{defn}
The windmill graph is $W\left(\eta,a\right)\cong K_{1}\nabla\left(\eta K_{a}\right)$.
These are the graphs consisting of $\eta$ copies of $K_{a}$ meeting
in a common vertex. The special cases $W\left(\eta,2\right)\cong K_{1}\nabla\left(\eta K_{2}\right)$
are known as the friendship graphs. 
\end{defn}
In Figure \ref{windmill graphs} we show some examples of windmill
graphs.

\begin{figure}[H]
\begin{centering}
\includegraphics[width=0.7\textwidth]{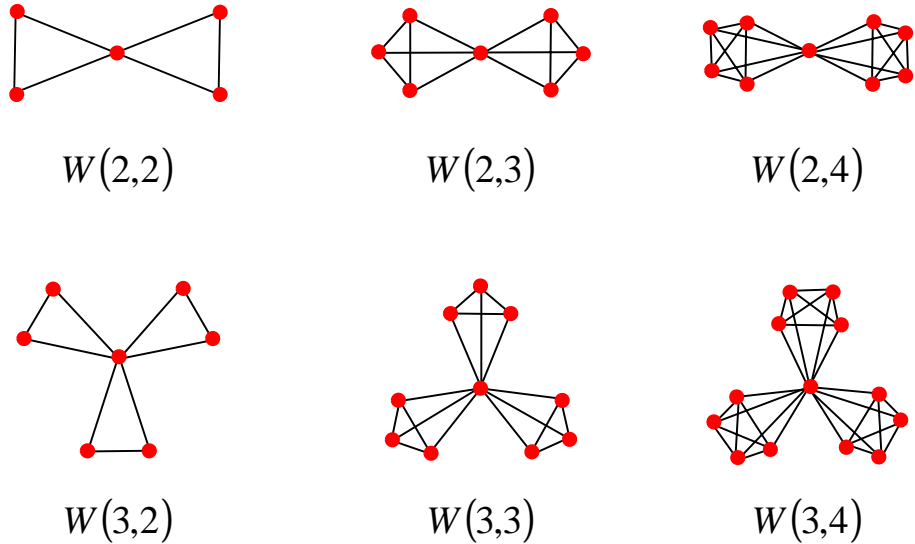} 
\par\end{centering}

\protect\protect\protect\caption{Illustration of some small windmill graphs including the classes of
friendship graphs.}

\label{windmill graphs} 
\end{figure}

Let us now define a few graph-theoretic invariants that will be studied
in this work. The so-called \textit{Watts-Strogatz clustering coefficient}
of a node $i$, which quantifies the degree of transitivity of local
relations in a graph is defined as \cite{WS}:

\begin{equation}
C_{i}=\frac{2t_{i}}{k_{i}(k_{i}-1)}
\end{equation}
where $t_{i}$ is the the number of triangles in which the node $i$
participates. Taking the mean of these values as $i$ varies among
all the nodes in $G$, one gets the \textit{average WS clustering
coefficient} of the network $G$:

\begin{equation}
\overline{C}=\frac{1}{n}\sum_{u=1}^{n}C_{u}.
\end{equation}

The so-called graph transitivity is defined as \cite{Newman clustering,Wasserman_Faust}

\begin{equation}
C=\frac{3t}{P_{2}},
\end{equation}

where $t$ is the total number of triangles and $P_{2}=\sum_{i=1}^{n}k_{i}(k_{i}-1)/2$.

Another important network parameter is the degree assortativity coefficient
which measures the tendency of high degree nodes to be connected to
other high degree nodes (assortativity) or to low degree ones (disassortativity).
The assortativity coefficient is mathematically expressed as \cite{Newman assortativity}:

\begin{equation}
r=\dfrac{m^{-1}\sum_{i,j\in E}k_{i}k_{j}-\dfrac{1}{4m^{2}}\left(\sum_{i,j\in E}k_{i}+k_{j}\right)^{2}}{\dfrac{1}{2m}\left(\sum_{i,j\in E}k_{i}^{2}+k_{j}^{2}\right)-\dfrac{1}{4m^{2}}\left(\sum_{i,j\in E}k_{i}+k_{j}\right)^{2}}.
\end{equation}

Let $A=\left(a_{uv}\right)_{n\times n}$ be the \textit{adjacency
matrix} of the graph. We denote by $\lambda_{1}>\lambda_{2}\geq\cdots\geq\lambda_{n}$
the eigenvalues of $A$. The dominant eigenvalue $\lambda_{1}$ is
also referred to as the Perron eigenvalue or spectral radius of $A$,
denoted $\rho(A)$. 
Let $\varDelta$ be the diagonal matrix of the vertex degrees of the graph.
Then, the Laplacian matrix of the graph is defined by $L=\varDelta-A$.
We denote by $\mu_{1}\ge\mu_{2}\geq\cdots\geq\mu_{n-1}>\mu_{n}=0$
the eigenvalues of $L$ . The eigenvalue $\mu_{n-1}$ is known as
the \textit{algebraic connectivity} of the graph and the eigenvector
associated to it is known as the \textit{Fiedler vector} \cite{Fiedler}.
For reviews about spectral properties of graphs the reader is directed
to the classic works \cite{Spectra of graphs_1,Spectra of graphs_2}.

\section{Core-satellite graphs}

The main paradigm for defining the core-satellite graphs is the following.
We consider a group of central nodes in a network which are connected
among them. Then, there are a few cliques of the same size which are
connected to the central core but which are not connected among them.
Formally, the core-satellite graphs are defined below. 
\begin{defn}
Let $c\geq1$, $s\geq1$ and $\eta\geq2$. The core-satellite graph is
$\Theta\left(c,s,\eta\right)\cong K_{c}\nabla\left(\eta K_{s}\right)$.
That is, they are the graphs consisting of $\eta$ copies of $K_{s}$
(the satellites) meeting in a common clique $K_{c}$ (the core), see
Fig. \ref{core-satellite}.

\begin{figure}[H]
\begin{centering}
\includegraphics[width=0.75\textwidth]{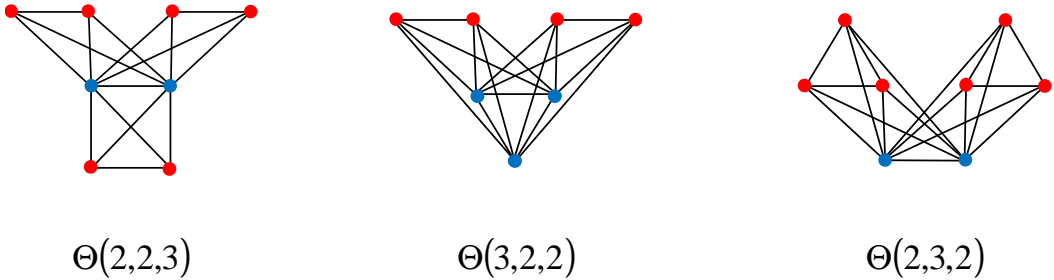} 
\par\end{centering}

\protect\protect\protect\caption{Examples of core-satellite graphs. Nodes in the core are drawn in
blue and those in the satellites in red.}

\label{core-satellite} 
\end{figure}
\end{defn}
\begin{rem}
The core-satellite graph generalizes the following classes of graphs:\end{rem}
\begin{itemize}
\item star graphs: $\Theta\left(1,1,\eta\right)$; 
\item agave graphs: $\Theta\left(2,1,\eta\right)$; 
\item complete split graphs: $\Theta\left(c,1,\eta\right)$; 
\item friendship graphs: $\Theta\left(1,2,\eta\right)$; 
\item windmill graphs: $\Theta\left(1,s,\eta\right)$. 
\end{itemize}
The core-satellite graphs were previously defined in the literature
in the context of extremal graph theory. They are named \textit{generalized
friendship graphs} in the works of Erd\H{o}s et al. \cite{generalized friendship_1},
Chen et al. \cite{generalized friendship_2} and Faudree et al. \cite{generalized friendship_3}.
However, the same name appears in connection to at least another class
of graphs. The generalized friendship graph is defined by Ahmad et
al. \cite{generalized friendship_4}, Arumugan and Nalliah \cite{generalized friendship_5},
Shi et al. \cite{generalized friendship_6} and by Jahari and Alikhani
\cite{generalized friendship_7} as the graph consisting of $r$ cycles
of orders $n_{1}\geq n_{2}\geq\cdots\geq n_{r}$ having a common vertex.
These graphs are also known as flowers. Thus, to avoid any confusion
we propose here the more intuitive name of \textit{core-satellite}
for these graphs.

\section{General properties of core-satellite graphs} \label{sec:Gen}

Let $\Theta\left(c,s,\eta\right)$ be a core-satellite graph and let
us designate any node in the core as $i$ and any node in a satellite
as $j$. Then, $k_{i}=c+\eta s-1$ and $k_{j}=c+s-1$, also $n=c+\eta s$
and $m=\left(\begin{array}{c}
c+p\\
2
\end{array}\right)-\left(\eta-1\right)\left(\begin{array}{c}
c\\
2
\end{array}\right)$. We now state our first result. 
\begin{thm}
\label{thm:clustering}Let $\Theta\left(c,s,\eta\right)$ be a core-satellite
graph. Then, for given values of $c$ and $s$ the average Watts-Strogatz
and the transitivity coefficients diverge when the number of cliques
$\eta$ tends to infinity:

\begin{equation}
\underset{\eta\rightarrow\infty}{\lim}\bar{C}=1,
\end{equation}

\begin{equation}
\underset{\eta\rightarrow\infty}{\lim}C=0.
\end{equation}
\end{thm}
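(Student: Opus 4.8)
The plan is to exploit the two-class homogeneity of the vertex set. Because every core vertex plays a symmetric role, and likewise every satellite vertex, the local quantities $t_u$ and $C_u$ take only two distinct values, which I would read off directly from the join structure $K_c\nabla(\eta K_s)$. I would first evaluate these two local clustering coefficients and then average them with the correct vertex multiplicities; for the transitivity I would instead count the total number of triangles and the total number of triads globally, and compare their growth rates in $\eta$.

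The key observation for the satellite vertices is that the neighborhood of a satellite node $j$ consists of the $c$ core vertices together with the $s-1$ remaining vertices of its own clique, and all of these are pairwise adjacent (the core is a clique, each satellite is a clique, and every core--satellite pair is joined). Hence $j$ together with its neighbors forms a clique $K_{c+s}$, so $t_j=\binom{k_j}{2}$ and therefore $C_j=1$ exactly, independently of $\eta$. For a core vertex $i$ I would count the triangles through $i$ according to the type of the other two vertices, obtaining $t_i=\binom{c-1}{2}+(c-1)\eta s+\eta\binom{s}{2}$, which is linear in $\eta$, whereas $k_i(k_i-1)=(c+\eta s-1)(c+\eta s-2)$ is quadratic; thus $C_i=O(1/\eta)\to 0$. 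Since there are $c$ core vertices and $\eta s$ satellite vertices, averaging gives
\begin{equation}
\bar{C}=\frac{c\,C_i+\eta s}{c+\eta s},
\end{equation}
and as $\eta\to\infty$ the bounded term $c\,C_i$ is negligible against $\eta s$, so $\bar{C}\to 1$.

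For the transitivity I would enumerate the total triangle count $t$ by the number of its vertices lying in the core (three in the core; two in the core; one in the core, which forces the two satellite vertices into a common clique; or all three inside a single satellite), yielding
\begin{equation}
t=\binom{c}{3}+\binom{c}{2}\eta s+c\,\eta\binom{s}{2}+\eta\binom{s}{3},
\end{equation}
which is again linear in $\eta$. On the other hand
\begin{equation}
P_2=\tfrac{1}{2}\bigl[c(c+\eta s-1)(c+\eta s-2)+\eta s(c+s-1)(c+s-2)\bigr]
\end{equation}
is dominated by the core term and is quadratic in $\eta$. Hence $C=3t/P_2=O(1/\eta)\to 0$.

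The computations are elementary; the only point demanding care is the triangle enumeration, where one must remember that two satellite vertices are adjacent only when they belong to the same clique, so pairs drawn from distinct satellites contribute no triangle. This is exactly the structural asymmetry driving the divergence: each satellite vertex sits inside a clique and has clustering precisely $1$, while every core vertex is adjacent to $\eta s$ satellite vertices that are mostly mutually non-adjacent, so both its local clustering and the triad-weighted global transitivity are forced down to zero.
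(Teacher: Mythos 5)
Your proof is correct and follows essentially the same route as the paper's: both compute the two local clustering values ($C_j=1$ for satellite vertices, $C_i=O(1/\eta)$ for core vertices), average them with multiplicities $c$ and $\eta s$, and then compare the global triangle count (linear in $\eta$) against the $2$-path count (quadratic in $\eta$). Your direct enumeration of triangles by how many vertices lie in the core expands to exactly the paper's inclusion--exclusion count $\eta\binom{c+s}{3}-(\eta-1)\binom{c}{3}$, so the two arguments agree term by term.
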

\begin{proof}
First, we obtain an expression for the Watts-Strogatz clustering coefficient
of windmill graphs. The Watts-Strogatz clustering coefficient of any
node in the core clique of $\Theta\left(c,s,\eta\right)$ is
\begin{equation}
C_{i}=\dfrac{\eta\left(c+s-1\right)\left(c+s-2\right)-\left(c-1\right)\left(c-2\right)\left(\eta-1\right)}{\left(\eta s+c-1\right)\left(\eta s+c-2\right)},\label{eq:node clustering}
\end{equation}
and the rest of the nodes have $C_{j}=1$. Thus, $\bar{C}=\left(cC_{i}+\eta s\right)/n$,
which gives

\begin{equation}
\bar{C}=1-\dfrac{cs^{2}\eta^{2}}{n\left(n-1\right)\left(n-2\right)}=1-\dfrac{c\left(n-c\right)^{2}}{n\left(n-1\right)\left(n-2\right)}
\end{equation}

Now we consider the transitivity index of a windmill graph $\Theta\left(c,s,\eta\right)$.
The total number of triangles in $W\left(\eta,\kappa\right)$ is
\begin{equation}
t=\eta\left(\begin{array}{c}
c+s\\
3
\end{array}\right)+\left(\eta-1\right)\left(\begin{array}{c}
c\\
3
\end{array}\right),
\end{equation}
and the number of 2-paths is
\begin{equation}
P_{2}=c\left(\begin{array}{c}
n-1\\
2
\end{array}\right)+\eta s\left(\begin{array}{c}
c+s\\
2
\end{array}\right).
\end{equation}
Thus, after substitution we get
\begin{equation}
C=\dfrac{\eta\left(c+s\right)\left(c+s-1\right)\left(c+s-2\right)-c\left(\eta-1\right)\left(c-1\right)\left(c-2\right)}{\eta s\left(c+s-1\right)\left(c+s-2\right)+c\left(c+\eta s-1\right)\left(c+\eta s-2\right)}.
\end{equation}
Obviously, $\underset{\eta\rightarrow\infty}{\lim}\bar{C}=1$ and
$\underset{\eta\rightarrow\infty}{\lim}C=0$
for any given values of $c$ and $s$, which proves the result.\hfill{} 
\end{proof}

Next, we prove a result related to the way in wich nodes connect to
each other in core-satellite graphs.
\begin{thm}
Let $\Theta\left(c,s,\eta\right)$ be a core-satellite graph. Then,
for any values of $c\geq1$, $s\geq1$, and $\eta\geq2$ the core-satellite
graphs are disassortative, i.e., $r<0.$\end{thm}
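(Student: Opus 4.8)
The plan is to exploit the fact that the denominator of $r$ is always nonnegative, so that the sign of $r$ is dictated entirely by its numerator, which is the covariance between the degrees at the two ends of a randomly chosen edge. First I would record that $\Theta(c,s,\eta)$ has only two distinct degree values, $k_i=c+\eta s-1$ for the $c$ core vertices and $k_j=c+s-1$ for the $\eta s$ satellite vertices, and that its edges split into exactly three types: core--core edges, of which there are $m_1=\binom{c}{2}$; core--satellite edges, of which there are $m_2=c\eta s$; and satellite--satellite edges, of which there are $m_3=\eta\binom{s}{2}$, with $m=m_1+m_2+m_3$. Since $\sum_u k_u^2=c\,k_i^2+\eta s\,k_j^2$ and $2m=c\,k_i+\eta s\,k_j$, the denominator equals the variance of the end-degree distribution, which is strictly positive because $\eta\ge 2$ forces $k_i\neq k_j$; hence it remains only to show that the numerator is negative.

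The crux is a reduction that uses the two-value structure. Because every edge-endpoint degree lies in $\{k_i,k_j\}$, writing the degree $X$ of a random endpoint as $X=k_j+(k_i-k_j)\,\xi$, with $\xi$ the indicator that the endpoint is a core vertex, turns the numerator into $(k_i-k_j)^2$ times $\mathrm{Cov}(\xi,\xi')$ taken over the two ends of a random edge. This covariance equals $\tfrac{m_1}{m}-q^2$, where $q=\tfrac{2m_1+m_2}{2m}=\tfrac{c\,k_i}{2m}$ is the probability that an endpoint is a core vertex, so the sign of the numerator is the sign of $4m\,m_1-(2m_1+m_2)^2=2c(c-1)m-c^2k_i^2$. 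Thus disassortativity is equivalent to $2(c-1)m<c\,k_i^2$. Using $c\,k_i=2m-\eta s\,k_j$ together with $k_i-c+1=\eta s$, I would then simplify
\[
c\,k_i^2-2(c-1)m=2m\,(k_i-c+1)-\eta s\,k_i k_j=\eta s\,(2m-k_i k_j),
\]
which reduces the entire statement to the single inequality $2m>k_i k_j$.

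The last step is a short positive computation: from $2m=c\,k_i+\eta s\,k_j$ one obtains
\[
2m-k_i k_j=(c-1)+s\bigl(c(\eta-1)+1\bigr),
\]
and every summand is nonnegative with the second strictly positive for $c\ge 1$, $s\ge 1$, $\eta\ge 2$, giving $2m>k_i k_j$ and hence $r<0$. I expect the main obstacle to be neither of the two end computations but rather recognizing the reduction in the middle: without the observation that only two degrees occur---which collapses Newman's Pearson coefficient to the covariance of a single Bernoulli indicator---one is left to expand $\sum_{(u,v)\in E}k_uk_v$, $\sum_u k_u^2$, and $m$ as polynomials in $c,s,\eta$ and argue the sign of a messy expression directly. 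Organizing the argument around the identity above is what makes the sign transparent.
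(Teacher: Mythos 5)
Your proof is correct --- I checked the three key identities, $2m_1+m_2=ck_i$, $ck_i^2-2(c-1)m=\eta s\,(2m-k_ik_j)$, and $2m-k_ik_j=(c-1)+s\bigl(c(\eta-1)+1\bigr)$, and the last expression is indeed strictly positive for $c\ge1$, $s\ge1$, $\eta\ge2$ --- but it takes a genuinely different route from the paper. The paper never touches Newman's definition directly: it invokes Estrada's subgraph decomposition $r=P_{2}\left(P_{3/2}+C-P_{2/1}\right)/\left(3S_{1,3}-P_{2}(P_{2/1}-1)\right)$, cites the known positivity of that denominator, asserts that core-satellite graphs contain no paths of length $3$ so that the sign of $r$ reduces to that of $C-P_{2/1}$, and concludes from $C\le 1$ together with $P_{2}-P_{1}>1$. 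You instead exploit the two-valued degree sequence to collapse the Pearson numerator to $(k_i-k_j)^2$ times the covariance of a Bernoulli indicator over the ends of a random edge, reducing the whole theorem to the one-line inequality $2m>k_ik_j$. What the paper's route buys is brevity and a connection to a general combinatorial formula for $r$; what yours buys is a self-contained, fully elementary computation that makes the source of the negativity explicit, and --- worth noting --- it sidesteps the paper's claim that the number of paths of length $3$ is zero, a claim that is safe only for \emph{induced} paths (core-satellite graphs are cographs, hence $P_4$-free): as subgraphs such paths abound, e.g.\ $\Theta(2,2,2)$ contains $K_4$ and hence twelve $3$-edge paths. The only point you should state a touch more carefully is the strict positivity of the denominator, which follows because core--satellite edges exist and $k_i-k_j=(\eta-1)s>0$; you essentially say this already.
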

\begin{proof}
We first need a result previously obtained by Estrada \cite{Estrada assortativity}
that expresses the assortativity coefficient in terms of subgraphs
of the corresponding graph,

\begin{equation}
r=\dfrac{P_{2}\left(P_{3/2}+C-P_{2/1}\right)}{3S_{1,3}-P_{2}\left(P_{2/1}-1\right)},\label{eq:assortativity}
\end{equation}
where $P_{s/t}$ represents the ratio of paths of length $s$ to paths
of length $t$, $S_{1,3}$ is the number of star subgraphs with a
central node and 3 pendant nodes, and $C$ is the transitivity index.
It has been previously proved in \cite{Estrada assortativity} that
the denominator of (\ref{eq:assortativity}) is positive. Thus,
the sign of the assortativity coefficient depends on the sign of $P_{3/2}+C-P_{2/1}$.
It is straighforward to realize that in a core-satellite graph the
number of paths of length 3 is zero. Consequently, the sign of $r$
depends only on the sign of $C-P_{2/1}$. We remind that $0\leq C\leq1$.
Thus, let us consider the difference $P_{2}-P_{1}$

\begin{equation}
P_{2}-P_{1}=\dfrac{ck_{i}}{2}\left(k_{i}-1\right)+\dfrac{\eta sk_{j}}{2}\left(k_{j}-2\right).
\end{equation}

Because $\eta\geq2$ then $k_{i}\geq2$ (notice that if $k_{i}=1$
the resulting graph is just $K_{2})$. If $k_{j}=1$ the core-satellite
graph corresponds to the star graph, which is a tree, and consequently
has $C=0$. Thus, the graph is diassortative. Let $k_{j}\geq2$, then
because $\eta\geq2$ we have that $k_{i}\geq4$, which implies that
$P_{2}-P_{1}>1$, i.e., $P_{2/1}>1$ and consequently $r<0$, which
proves the result. 
\end{proof}

\section{Spectral properties of core-satellite graphs}

In this section we give a full description of the spectral properties
(eigenvalues and eigenvectors) of core-satellite graphs, and we investigate
extensions to more general types of graphs. We begin by proving the
following result. 
\begin{thm}
\label{thm:spectrum} The spectrum of the core-satellite graph $\Theta(c,s,\eta)$
consists of: 
\begin{enumerate}
\item The eigenvalue $\lambda=-1$ with multiplicity $c+\eta(s-1)-1$; 
\item The eigenvalue $\lambda=s-1$ with multiplicity $\eta-1$; 
\item The eigenvalues $\lambda_{\pm}$ given by the roots of the quadratic
equation 
\begin{equation}
\lambda^{2}-(c+s-2)\lambda+(c-1)(s-1)-\eta cs=0.\label{eq:quadratic}
\end{equation}

\end{enumerate}
\end{thm}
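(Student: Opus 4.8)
The plan is to exploit the block structure of the adjacency matrix $A$ induced by the vertex decomposition $V=V_{0}\cup V_{1}\cup\cdots\cup V_{\eta}$, where $V_{0}$ is the core ($c$ vertices) and $V_{1},\dots,V_{\eta}$ are the satellites ($s$ vertices each). Ordering the vertices accordingly, $A$ is a bordered block matrix whose diagonal blocks are $J_{c}-I_{c}$ (core) and $J_{s}-I_{s}$ (each satellite), whose core-to-satellite off-diagonal blocks are all-ones rectangular matrices, and whose satellite-to-satellite off-diagonal blocks vanish; here $J$ and $I$ denote all-ones and identity matrices of the appropriate size. I would then exhibit a complete orthogonal eigenbasis by inspection, organized into three families that realize the three items of the theorem, and finish with a dimension count.

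First, for $\lambda=-1$, I would take any vector $x$ supported on a single block (a satellite or the core) whose entries on that block sum to zero. An all-ones block annihilates such a vector, so every coupling term in $Ax$ disappears, and on its supporting block $A$ acts as $J-I$, hence as multiplication by $-1$ (since $J$ kills a sum-zero vector). Each satellite contributes an $(s-1)$-dimensional space and the core a $(c-1)$-dimensional one, giving multiplicity $\eta(s-1)+(c-1)=c+\eta(s-1)-1$, which is item~1. Next, for $\lambda=s-1$, I would take vectors vanishing on the core and constant with value $\alpha_{k}$ on each satellite $V_{k}$, subject to $\sum_{k}\alpha_{k}=0$. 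The sum-zero condition forces the induced value on the core to vanish, while $J_{s}-I_{s}$ multiplies each constant satellite block by $s-1$; this gives the $(\eta-1)$-dimensional eigenspace of item~2.

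The two remaining eigenvalues come from the orthogonal complement of the families above, which is spanned by the indicator vectors $\mathbf{1}_{V_{0}}$ and $\mathbf{1}_{V\setminus V_{0}}$ (constant on the core and constant across the union of satellites). This pair spans an $A$-invariant subspace on which $A$ acts as the $2\times2$ quotient (equitable-partition) matrix
\begin{equation}
B=\begin{pmatrix} c-1 & \eta s \\ c & s-1 \end{pmatrix},
\end{equation}
whose $(i,j)$ entry records the number of neighbours in class $j$ of a vertex in class $i$. Its characteristic polynomial is $\lambda^{2}-(c+s-2)\lambda+(c-1)(s-1)-\eta cs$, which is precisely \eqref{eq:quadratic}, so its roots $\lambda_{\pm}$ are item~3.

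The three families lie in pairwise orthogonal subspaces, and their dimensions sum to $[c+\eta(s-1)-1]+(\eta-1)+2=c+\eta s=n$; this is the one identity I would verify explicitly. It follows that the constructed vectors form a full orthogonal eigenbasis, so the listed eigenvalues and multiplicities exhaust the spectrum. No step presents a genuine obstacle; the only point demanding care is this completeness bookkeeping---checking the mutual orthogonality and that the dimensions add up to $n$---which guarantees that nothing is missed and that the stated multiplicities are exact (in the non-generic case where a root of \eqref{eq:quadratic} happens to equal $-1$ or $s-1$, the corresponding multiplicities simply add).
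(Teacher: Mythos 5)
Your proof is correct and follows essentially the same route as the paper: the same block decomposition of $A$, the same sum-zero vectors supported on a single clique for $\lambda=-1$, and the same constant-per-satellite vectors with $\sum_{k}\alpha_{k}=0$ for $\lambda=s-1$, finished by the same orthogonality-plus-dimension count. The only (cosmetic) difference is in item~3: you read off the last two eigenvalues from the characteristic polynomial of the $2\times 2$ equitable-partition quotient matrix acting on the span of $\mathbf{1}_{V_{0}}$ and $\mathbf{1}_{V\setminus V_{0}}$, whereas the paper parametrizes eigenvectors on that same two-dimensional invariant subspace as $(\mathbf{1}_{c},\beta\mathbf{1}_{s},\ldots,\beta\mathbf{1}_{s})$ and eliminates $\beta$ to arrive at the identical quadratic.
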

\begin{proof}
For any positive integer $p$, we use ${\bf 1}_{p}$ to denote the
column vector with all entries equal to 1. Let $A(K_{c})$ and $A(K_{s})$
denote, respectively, the adjacency matrices of the complete graphs
with $c$ and $s$ nodes. Note that $A(K_{c})={\bf 1}_{c}{\bf 1}_{c}^{T}-I_{c}$,
$A(K_{s})={\bf 1}_{s}{\bf 1}_{s}^{T}-I_{s}$ with $I_{c}$ and $I_{s}$
standing for the $c\times c$ and $s\times s$ identity matrices.
With an obvious ordering of the nodes, the adjacency matrix of $\Theta(c,s,\eta)$
can be written as the block matrix 
\begin{equation}
A=\left[\begin{array}{c|ccc}
A(K_{c}) & {\bf 1}_{c}{\bf 1}_{s}^{T} & \cdots & {\bf 1}_{c}{\bf 1}_{s}^{T}\\[2pt]
\hline \\[-8pt]
{\bf 1}_{s}{\bf 1}_{c}^{T} & A(K_{s})\\
\vdots &  & \ddots\\[2pt]
{\bf 1}_{s}{\bf 1}_{c}^{T} &  &  & A(K_{s})
\end{array}\right],\label{eq:adj}
\end{equation}
with $\eta$ copies of $A(K_{s})$ appearing on the block diagonal.
Now let the vector 
\begin{equation}
{\bf x}=\left[\begin{array}{c}
{\bf x}_{1}\\[2pt]
\hline \\[-8pt]
{\bf 0}\\
\vdots\\[2pt]
{\bf 0}
\end{array}\right],\qquad{\bf x}_{1}\ne{\bf 0},\label{eq:ev1}
\end{equation}
be partitioned conformally to $A$. Then the equation $A{\bf x}=\lambda{\bf x}$
becomes 
\[
\left[\begin{array}{c}
A(K_{c}){\bf x}_{1}\\[2pt]
\hline \\[-8pt]
({\bf 1}_{c}^{T}{\bf x}_{1}){\bf 1}_{s}\\
\vdots\\[2pt]
({\bf 1}_{c}^{T}{\bf x}_{1}){\bf 1}_{s}
\end{array}\right]=\left[\begin{array}{c}
\lambda{\bf x}_{1}\\[2pt]
\hline \\[-8pt]
{\bf 0}\\
\vdots\\[2pt]
{\bf 0}
\end{array}\right].
\]
Clearly, this is equivalent to $A(K_{c}){\bf x}_{1}=\lambda{\bf x}_{1}$,
and ${\bf x}_{1}\perp{\bf 1}_{c}$. Hence, $\lambda=-1$ and there
are $c-1$ linearly independent eigenvectors of the form (\ref{eq:ev1}).
Next, consider vectors of the form 
\begin{equation}
{\bf x}=\left[\begin{array}{c}
{\bf 0}\\[2pt]
\hline \\[-8pt]
{\bf 0}\\
\vdots\\[2pt]
{\bf 0}\\
{\bf x}_{i}\\[2pt]
{\bf 0}\\
\vdots\\[2pt]
{\bf 0}
\end{array}\right],\qquad{\bf x}_{i}\ne{\bf 0},\qquad i=1,\ldots,\eta,\label{eq:ev2}
\end{equation}
where ${\bf x}_{i}$ corresponds to the $(i+1)$th diagonal block
of $A$. Then the equation $A{\bf x}=\lambda{\bf x}$ becomes 
\[
\left[\begin{array}{c}
({\bf 1}_{s}^{T}{\bf x}_{i}){\bf 1}_{c}\\[2pt]
\hline \\[-8pt]
{\bf 0}\\
\vdots\\[2pt]
{\bf 0}\\
A(K_{s}){\bf x}_{i}\\[2pt]
{\bf 0}\\
\vdots\\[2pt]
{\bf 0}
\end{array}\right]=\left[\begin{array}{c}
{\bf 0}\\[2pt]
\hline \\[-8pt]
{\bf 0}\\
\vdots\\[2pt]
{\bf 0}\\
\lambda{\bf x}_{i}\\[2pt]
{\bf 0}\\
\vdots\\[2pt]
{\bf 0}
\end{array}\right].
\]
Clearly, this is equivalent to $A(K_{s}){\bf x}_{i}=\lambda{\bf x}_{i}$,
and ${\bf x}_{i}\perp{\bf 1}_{s}$ for $i=1,\ldots,\eta$. This yields
another $\eta(s-1)$ linearly independent eigenvectors associated
with the eigenvalue $\lambda=-1$.

Now, consider vectors of the form 
\begin{equation}
\left[\begin{array}{c}
{\bf 0}\\[2pt]
\hline \\[-8pt]
\alpha_{1}{\bf 1}_{s}\\
\vdots\\[2pt]
\alpha_{\eta}{\bf 1}_{s}
\end{array}\right],\label{eq:ev3}
\end{equation}
with the $\alpha_{i}$ not all zero. Then the equation $A{\bf x}=\lambda{\bf x}$
becomes 
\[
\left[\begin{array}{c}
\sum_{i=1}^{\eta}\alpha_{i}({\bf 1}_{s}^{T}{\bf 1}_{s}){\bf 1}_{c}\\[2pt]
\hline \\[-8pt]
\alpha_{1}A(K_{s}){\bf 1}_{s}\\
\vdots\\[2pt]
\alpha_{\eta}A(K_{s}){\bf 1}_{s}
\end{array}\right]=\left[\begin{array}{c}
{\bf 0}\\[2pt]
\hline \\[-8pt]
\lambda\alpha_{1}{\bf 1}_{s}\\
\vdots\\[2pt]
\lambda\alpha_{\eta}{\bf 1}_{s}
\end{array}\right].
\]
Since at least one of the $\alpha_{i}$ must be nonzero, this is equivalent
to 
\[
A(K_{s}){\bf 1}_{s}=\lambda{\bf 1}_{s},\qquad\alpha_{1}+\cdots+\alpha_{\eta}=0.
\]
The first of these two conditions implies that $\lambda=s-1$, and
there are exactly $\eta-1$ linearly independent eigenvectors of the
form (\ref{eq:ev3}) since the solution space of the homogeneous linear
equation $\alpha_{1}+\cdots+\alpha_{\eta}=0$ has dimension $\eta-1$.

To determine the two remaining eigenvectors and corresponding eigenvalues,
let 
\begin{equation}
{\bf x}=\left[\begin{array}{c}
{\bf 1}_{c}\\[2pt]
\hline \\[-8pt]
\beta{\bf 1}_{s}\\
\vdots\\[2pt]
\beta{\bf 1}_{s}
\end{array}\right],
\qquad\beta\ne0.\label{eq:ev4}
\end{equation}
Then the equation $A{\bf x}=\lambda{\bf x}$ becomes 
\[
\left[\begin{array}{c}
(c-1){\bf 1}_{c}+\eta\beta s\\[2pt]
\hline \\[-8pt]
c{\bf 1}_{s}+\beta(s-1){\bf 1}_{s}\\
\vdots\\[2pt]
c{\bf 1}_{s}+\beta(s-1){\bf 1}_{s}
\end{array}\right]=\left[\begin{array}{c}
\lambda{\bf 1}_{c}\\[2pt]
\hline \\[-8pt]
\lambda\beta{\bf 1}_{s}\\
\vdots\\[2pt]
\lambda\beta{\bf 1}_{s}
\end{array}\right],
\]
which reduces to the two conditions 
\begin{equation}
c-1+\eta\beta s=\lambda,\qquad c+\beta(s-1)=\lambda\beta.\label{eq:cond}
\end{equation}
Conditions (\ref{eq:cond}) yield 
\begin{equation}
\beta=\frac{\lambda-c+1}{\eta s},\qquad\beta=\frac{c}{\lambda-s+1}\,,\label{eq:cond1}
\end{equation}
hence 
\[
\frac{\lambda-c+1}{\eta s}=\frac{c}{\lambda-s+1}
\]
which is easily seen to become equation (\ref{eq:quadratic}) upon
multiplication of both sides by $\eta s(\lambda-s+1)$ and rearranging
terms. Solving for $\lambda$ yields the two remaining (simple) eigenvalues
of $A$: 
\begin{equation}
\lambda_{\pm}=\frac{1}{2}\left[c+s-2\pm\sqrt{(c-s)^{2}+4\eta cs}\,\right].\label{eq:roots}
\end{equation}
These two roots yield two distinct values $\beta_{\pm}$ of $\beta$
via (\ref{eq:cond1}) and thus two distinct eigenvectors of $A$ of
the form (\ref{eq:ev4}). It is obvious that these two eigenvectors
are linearly independent (indeed, mutually orthogonal). \end{proof}
\begin{rem}
This theorem generalizes the spectral analysis of windmill graphs
given in \cite{clustering divergence} to core-satellite graphs. 
\end{rem}

\begin{rem}
We emphasize that the proof of the previous theorem explicitly reveals
the structure of the eigenvectors of $A$; see equations (\ref{eq:ev1}),
(\ref{eq:ev2}), (\ref{eq:ev3}) and (\ref{eq:ev4}), together with
the fact that the eigenvectors associated with the complete graphs
are known. 
\end{rem}
We also point out another consequence of Theorem \ref{thm:spectrum}.
\begin{thm}
\label{thm:extreme} The two eigenvalues under item 3 in Theorem \ref{thm:spectrum}
are the extreme eigenvalues of $A$. In particular, the Perron eigenvalue
(spectral radius) $\rho(A)$ of $A$ is given by 
\[
\lambda_{+}=\frac{1}{2}\left[c+s-2+\sqrt{(c-s)^{2}+4\eta cs}\,\right]\,,
\]
while the leftmost eigenvalue of $A$ is given by 
\[
\lambda_{-}=\frac{1}{2}\left[c+s-2-\sqrt{(c-s)^{2}+4\eta cs}\,\right]\,.
\]

\end{thm}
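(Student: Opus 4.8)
The plan is to leverage the complete spectral decomposition furnished by Theorem~\ref{thm:spectrum}. That result shows that $A$ has only the distinct candidate eigenvalue values $-1$, $s-1$, and the two roots $\lambda_\pm$ of (\ref{eq:quadratic}), and the stated multiplicities sum to $\bigl(c+\eta(s-1)-1\bigr)+(\eta-1)+2=c+\eta s=n$, so these exhaust the spectrum. Consequently it suffices to compare $\lambda_+$ and $\lambda_-$ against the two ``bulk'' values $-1$ and $s-1$, and to verify that $\lambda_+$ strictly dominates both while $\lambda_-$ lies strictly below both.

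First I would bound $\lambda_+$ from below. A short rearrangement gives $\lambda_+-(s-1)=\tfrac12\bigl[(c-s)+\sqrt{(c-s)^2+4\eta cs}\,\bigr]$. Since $c,s\ge 1$ and $\eta\ge 2$ force $4\eta cs>0$, the radical strictly exceeds $\sqrt{(c-s)^2}=|c-s|\ge s-c$, whence $(c-s)+\sqrt{(c-s)^2+4\eta cs}>0$ and therefore $\lambda_+>s-1$. Because $s-1\ge 0>-1$, this inequality simultaneously yields $\lambda_+>-1$, so $\lambda_+$ is strictly the largest eigenvalue of $A$.

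Next I would bound $\lambda_-$ from above. Here $\lambda_-+1=\tfrac12\bigl[(c+s)-\sqrt{(c-s)^2+4\eta cs}\,\bigr]$, and since $c+s>0$ I may square to reduce the desired inequality $\lambda_-<-1$ to $(c+s)^2<(c-s)^2+4\eta cs$, i.e.\ $4cs<4\eta cs$, which holds precisely because $\eta\ge 2>1$. As $-1\le s-1$, it follows at once that $\lambda_-<s-1$ as well, so $\lambda_-$ is strictly the smallest eigenvalue of $A$. Finally, since the graph is connected, Perron--Frobenius theory guarantees the spectral radius is a simple positive eigenvalue; having identified $\lambda_+$ as the (necessarily positive) maximal eigenvalue, I conclude $\rho(A)=\lambda_+$, while $\lambda_-$ is the leftmost eigenvalue.

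The computations are elementary, so there is no deep obstacle. The one point demanding \emph{care} is the squaring step for $\lambda_-$, where I must first confirm that both sides of the inequality are nonnegative before squaring; it is worth highlighting that the strict hypothesis $\eta>1$, rather than merely $\eta\ge 1$, is exactly what pushes $\lambda_-$ strictly below $-1$ (at $\eta=1$ one would only get $\lambda_-=-1$, collapsing the separation).
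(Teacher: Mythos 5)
Your proposal is correct and follows essentially the same route as the paper: both arguments use the exhaustive list of eigenvalues from Theorem~\ref{thm:spectrum} and elementary inequalities exploiting $\eta\ge 2$ to place $\lambda_+$ strictly above $s-1$ and $\lambda_-$ strictly below $-1$ (the paper bounds the radical below by $\sqrt{(c-s)^2+4cs}=c+s$, while you rearrange and square, which is the same estimate in a different guise). Your explicit check that the multiplicities sum to $n$ and the closing appeal to Perron--Frobenius are welcome additions, but they do not change the substance of the argument.
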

Moreover, $\lambda_{-}$ is strictly less than the other eigenvalues
of $A$. 
\begin{proof}
We have 
\[
\lambda_{+}=\frac{c+s-2+\sqrt{(c-s)^{2}+4\eta cs}}{2}>\frac{c+s-2+\sqrt{(c-s)^{2}+4cs}}{2}=\frac{c+s-2}{2}+\frac{\sqrt{(c+s)^{2}}}{2}=\frac{c+s}{2}-1+\frac{c+s}{2}=c+s-1>s-1\,,
\]
proving that $\lambda_{+}$ exceeds the other eigenvalues of $A$
in magnitude.

Similarly, 
\[
\lambda_{-}=\frac{c+s-2-\sqrt{(c-s)^{2}+4\eta cs}}{2}<\frac{c+s-2}{2}-\frac{\sqrt{(c-s)^{2}+4cs}}{2}=\frac{c+s}{2}-1-\frac{\sqrt{(c+s)^{2}}}{2}=-1\,,
\]
proving that $\lambda_{-}$ is strictly less than all the other eigenvalues. \end{proof}

\begin{rem}
It follows that the principal eigenvector of $A$ is of the form (\ref{eq:ev4}).
Since this eigenvector is constant on each of the cliques forming
$\Theta(c,s,\eta)$, the eigenvector centrality \cite{EstradaBook}
of each node is the same for all the nodes in each clique, as one
would expect. Furthermore, for $\eta>1$ the nodes in the centrale
core have higher centrality than the nodes in the satellite cliques,
which is also to be expected. In other words, in (\ref{eq:ev4}) we
have that $0<\beta<1$. This can be shown using conditions (\ref{eq:cond1})
together with the computed expressions for $\lambda_{\pm}$. See also
Remark \ref{rm:centr} below. 
\end{rem}

\begin{rem} \label{rm:four}
It also follows that $\Theta(c,s,\eta)$ has exactly
four distinct eigenvalues. Moreover, it is clear from the expression
given for $\lambda_{+}$ that it must go to infinity as the graph
size $n=c+\eta s$ goes to infinity, i.e., if at least one of the
parameters $c$, $s$, or $\eta$ tends to infinity. It follows that
the {\em infection threshold}, defined as $\tau=1/\rho(A)$, vanishes
as the graph size grows, showing that core-satellite graphs of even
modest size offer almost no resistance to the spreading of infections,
gossip, rumors, etc. This is of course not surprising since the diameter
of a core-satellite graph is only 2. We refer to \cite{Dynamics_1}
for additional discussion of infection spreading in networks. See
also \cite{clustering divergence} for a discussion of the special
case of windmill graphs. 
\end{rem}

\section{Generalizing core-satellite graphs }

A more realistic scenario for modeling real-world complex networks
is to consider graphs in which not all satellite cliques are identical.
Consequently, we consider the generalization of core-satellite graphs to the case
where the satellites cliques are not restricted to have all the same
number of nodes. 

To this end,
let $t\ge1$ and consider a graph consisting of a core clique
with $c$ nodes and $\eta=\eta_{1}+\eta_{2}+\cdots+\eta_{t}$ satellite
cliques where $\eta_{1}$ cliques have $s_{1}$ nodes, $\eta_{2}$
cliques have $s_{2}$ nodes, ... , and $\eta_{t}$ cliques have $s_{t}$
nodes, where $s_{i}\ne s_{j}$ for $i\ne j$. Here $s_{i}\ge1$ and
$\eta_{i}\ge1$ for all $i=1,\ldots,t$.  As before, we assume that
each node in every satellite clique is connected to each node in the
core clique.

We introduce the following notation: letting 
\[
{\bf s}=(s_{1},s_{2},\ldots,s_{t}),\qquad{\boldsymbol{\eta}}=(\eta_{1},\eta_{2},\ldots,\eta_{t})\,,
\]
we denote with $\Theta(c,{\bf s},{\boldsymbol{\eta}})$ the {\em
generalized core-satellite graph} just decribed. Although this notation
is very convenient for expressing our mathematical results, it can
be very cumbersome for illustrative examples. For that last purpose
we will denote the number of cliques of a given size by an integer,
using a subindex for the size of the clique as illustrated in the
Figure \ref{Generalized CS}. Note that such a graph has $n=c+\sum_{i=1}^{t}\eta_{i}s_{i}$
nodes. Core-satellite graphs $\Theta(c,s,\eta)$ are obtained as special
cases for $t=1$. At the other end of the spectrum we have the generalized
core-satellite graphs in which all the satellite cliques have a different
number of nodes, corresponding to the case $t=\eta$, $\eta_{i}=1$
for all $i=1,\ldots,t$.

\begin{figure}
\begin{centering}
\includegraphics[width=0.65\textwidth]{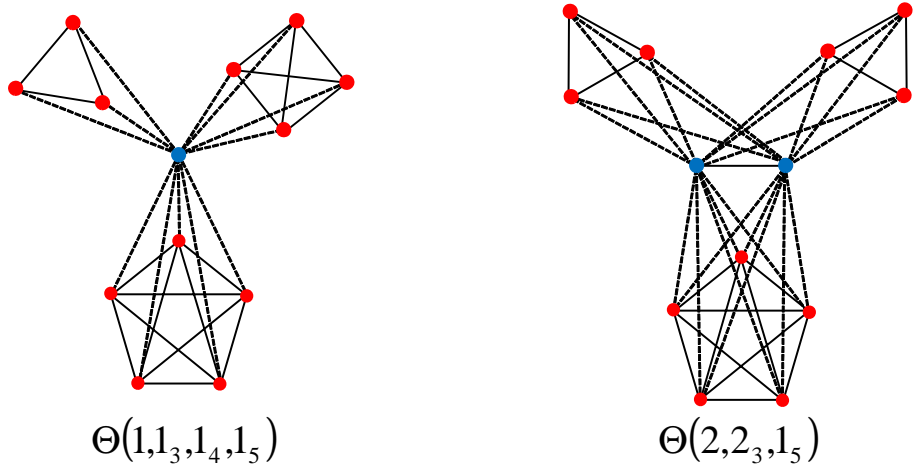} 
\par\end{centering}
\protect\protect\caption{Examples of generalized core-satellite graphs. Nodes in the core are
drawn in blue and those in the satellites in red. In this notation
for instance $1_{3},1_{4},1_{5}$ indicates that there is one clique
of size 3, one clique of size 4 and one clique of size 5.}
\label{Generalized CS} 
\end{figure}

We first investigate the phenomenon of clustering divergence and the
degree assortativity of these graphs. Unfortunately, it is difficult to extend
the analytic approach used in section \ref{sec:Gen} to the case of generalized 
core-satellite graphs. Therefore, we perform numerical experiments instead.
 Our computations indicate that
the clustering divergence phenomenon observed for the simple core-satellite
graphs also occurs in the generalized case. These graphs also display degree 
disassortativity like their simpler
analogues. In Figure \ref{properties GCS} we illustrate the divergence
of the clustering coefficients of these graphs and the increase of
their disassortativity as the number of satellite cliques in the graph increases.
For the sake of brevity, we only report here results for three kinds of generalized
core-satellite graphs having cores with 3, 5 and 10 nodes (we remark
that in real-world networks it is rare to find cliques od size larger
than 10 \cite{EstradaBook}). Then, we consider satellites of size
3, 5 and 7 and generate graphs $\Theta\left(c,\left(3,5,7\right),\left(p,p,p\right)\right)$
with $c\in \left\{ 3,5,10\right\} $ and $1\leq p\leq100$. As can be
seen in Fig. \ref{properties GCS}, as the number of satellites increases
the average Watts-Strogatz clustering coefficient approaches the asymptotic value 1,
while the transitivity index drops to zero. The increase of the Watts-Strogatz
index is slower for the graphs with bigger core as expected by the
fact that here more nodes with low clustering exists, namely, those
in the core. However, the assortativity coefficient of these graphs
decays much faster than that for graphs with smaller cores. In all
cases, the assortativity coefficient is negative, similarly to what
happens with the simple core-periphery graphs.

\begin{figure}
\begin{centering}
\includegraphics[width=0.65\textwidth]{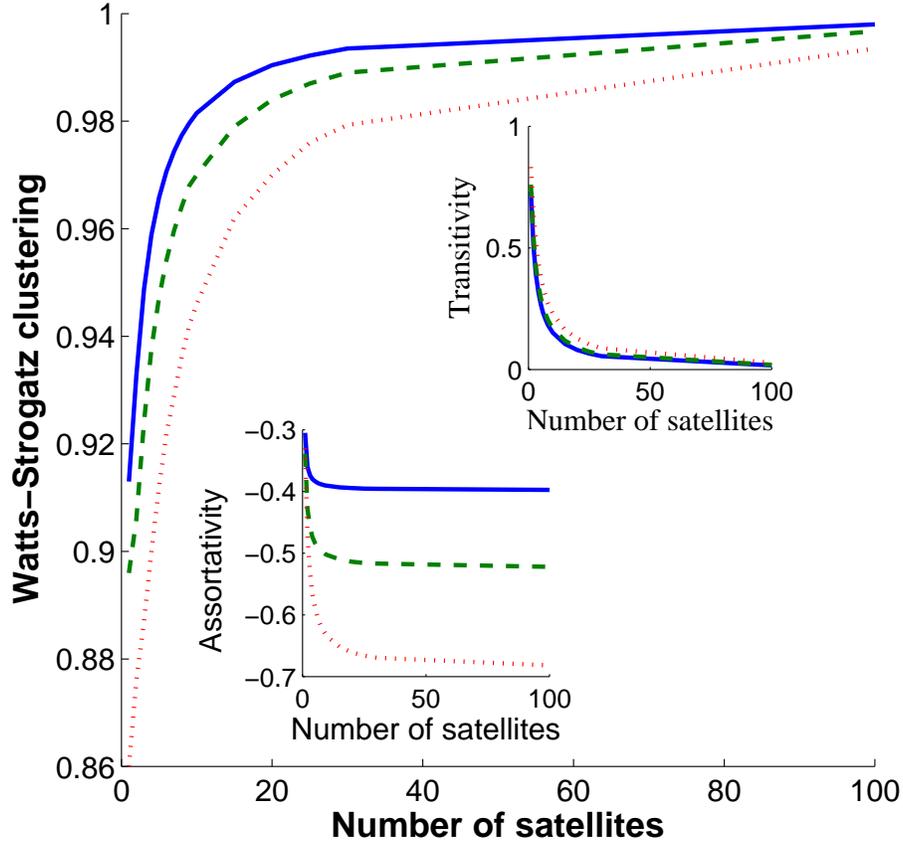}
\par\end{centering}

\protect\caption{Change in the average Watts-Strogatz clustering coefficient with the
increase of the number of satellite cliques for generalized core-satellite
graphs. The graphs have cores with size 3 (solid blue line), 5 (broken
green line) and 10 (dotted red line). Three sizes are used for the
satellites: 3, 5 and 7. The number of satellites of each type is the
same and varies between 1 and 100, i.e., we generate graphs 
$\Theta\left(c,\left(3,5,7\right),\left(p,p,p\right)\right)$
with $c\in \left\{ 3,5,10\right\} $ and $1\leq p\leq100$. In the insets
the plots for the global clustering (transitivity) and the assortativity
coefficient are shown for the same graphs. }

\label{properties GCS}
\end{figure}

Next, we study the 
spectral properties of the generalized core-satellite
graphs. 
Let us define the $c\times(\eta_{i}s_{i})$ matrices 
\[
B_{i}=\underbrace{\left[{\bf 1}_{c}{\bf 1}_{s_{i}}^{T}\,\,\,\,\,{\bf 1}_{c}{\bf 1}_{s_{i}}^{T}\,\,\,\,\,\cdots\,\,\,\,\,{\bf 1}_{c}{\bf 1}_{s_{i}}^{T}\right]}_{\eta_{i}{\rm \;times}}\,,\qquad i=1,\ldots,t\,,
\]
and the $(\eta_{i}s_{i})\times(\eta_{i}s_{i})$ block diagonal matrices
\[
A_{i}=\underbrace{A(K_{s_{i}})\oplus\cdots\oplus A(K_{s_{i}})}_{\eta_{i}{\rm \;times}}=\left[\begin{array}{cccc}
A(K_{s_{i}})\\
 & A(K_{s_{i}})\\
 &  & \ddots\\
 &  &  & A(K_{s_{i}})
\end{array}\right],\qquad i=1,\ldots,t
\]
(with $\eta_{i}$ identical diagonal blocks). Then, for a suitable
ordering of the nodes, the adjacency matrix of $\Theta(c,{\bf s},{\boldsymbol{\eta}})$,
can be written in block form as 
\begin{equation}
A=\left[\begin{array}{c|cccc}
A(K_{c}) & B_{1} & B_{2} & \cdots & B_{t}\\[2pt]
\hline \\[-5pt]
B_{1}^{T} & A_{1}\\[2pt]
B_{2}^{T} &  & A_{2}\\
\vdots &  &  & \ddots\\[2pt]
B_{t}^{T} &  &  &  & A_{t}
\end{array}\right]\,.\label{eq:adj2}
\end{equation}
This matrix is $n\times n$ with $n=c+\sum_{i=1}^{t}\eta_{i}s_{i}$.
We have the following generalization of Theorem \ref{thm:spectrum}. 
\begin{thm}
\label{thm:spectrum2} The spectrum of the generalized core-satellite
graph $\Theta(c,{\bf s},{\boldsymbol{\eta}})$ consists of 
\begin{enumerate}
\item The eigenvalue $\lambda=-1$ with multiplicity $c+\sum_{i=1}^{t}\eta_{i}(s_{i}-1)-1$; 
\item The eigenvalue $s_{i}-1$ with multiplicity $\eta_{i}-1$, for all
$i=1,\ldots,t$ with $\eta_{i}>1$; 
\item The roots of the following algebraic equation of degree $t+1$: 
\begin{equation}
(\lambda-c+1)\prod_{i=1}^{t}(\lambda-s_{i}+1)=c\,\sum_{i=1}^{t}\eta_{i}s_{i}\prod_{j\ne i}(\lambda-s_{j}+1)\,,\label{eq:algebraic}
\end{equation}
each of multiplicity one. 
\end{enumerate}
\end{thm}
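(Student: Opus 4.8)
The plan is to mirror the proof of Theorem \ref{thm:spectrum}, exploiting the block structure (\ref{eq:adj2}) to exhibit a complete, explicit system of eigenvectors of the same types. Throughout I write $\eta=\sum_{i=1}^{t}\eta_{i}$ for the total number of satellite cliques and partition the coordinates into the core block and the $\eta$ satellite sub-blocks (the $\eta_{i}$ blocks of size $s_{i}$, for $i=1,\ldots,t$).

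First I would recover the eigenvalue $\lambda=-1$ exactly as before. Taking ${\bf x}$ supported on the core with core part orthogonal to ${\bf 1}_{c}$ forces $A(K_{c}){\bf x}_{1}=\lambda{\bf x}_{1}$ with ${\bf x}_{1}\perp{\bf 1}_{c}$, hence $\lambda=-1$ with $c-1$ such vectors; taking ${\bf x}$ supported on a single satellite clique of size $s_{i}$ and orthogonal to ${\bf 1}_{s_{i}}$ forces $A(K_{s_{i}}){\bf x}_{i}=\lambda{\bf x}_{i}$ with ${\bf x}_{i}\perp{\bf 1}_{s_{i}}$, again $\lambda=-1$, and there are $\sum_{i}\eta_{i}(s_{i}-1)$ such vectors. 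This yields item 1, with total multiplicity $c+\sum_{i}\eta_{i}(s_{i}-1)-1$. For item 2, I fix a size class $i$ with $\eta_{i}>1$ and take ${\bf x}$ supported on that class, equal to $\alpha_{k}{\bf 1}_{s_{i}}$ on its $k$-th sub-block with $\sum_{k=1}^{\eta_{i}}\alpha_{k}=0$; the zero-sum condition kills the contribution to the core block, while each sub-block returns $A(K_{s_{i}}){\bf 1}_{s_{i}}=(s_{i}-1){\bf 1}_{s_{i}}$, giving eigenvalue $s_{i}-1$ with multiplicity $\eta_{i}-1$.

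For item 3 I would use the ansatz that is constant on the core and constant on every clique of a given size: ${\bf x}$ equals $\gamma{\bf 1}_{c}$ on the core and $\beta_{i}{\bf 1}_{s_{i}}$ on each sub-block of class $i$, with $\gamma\neq0$. Writing $A{\bf x}=\lambda{\bf x}$ block by block yields the core equation $(c-1)\gamma+\sum_{i}\eta_{i}s_{i}\beta_{i}=\lambda\gamma$ and, for each $i$, the satellite equation $c\gamma+(s_{i}-1)\beta_{i}=\lambda\beta_{i}$. Solving the latter for $\beta_{i}=c\gamma/(\lambda-s_{i}+1)$, substituting into the former, dividing by $\gamma$, and clearing denominators by $\prod_{k}(\lambda-s_{k}+1)$ gives precisely the secular equation (\ref{eq:algebraic}), a polynomial of degree $t+1$ in $\lambda$.

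The main work is to show that (\ref{eq:algebraic}) has exactly $t+1$ simple real roots, each distinct from $-1$ and from every $s_{i}-1$, and that the eigenvectors found so far are complete. I would study the rational function $g(\lambda)=\lambda-c+1-c\sum_{i}\eta_{i}s_{i}/(\lambda-s_{i}+1)$, whose zeros are the sought eigenvalues. Since $g'(\lambda)=1+c\sum_{i}\eta_{i}s_{i}/(\lambda-s_{i}+1)^{2}>0$ wherever defined, $g$ is strictly increasing on each of the $t+1$ open intervals cut out by the simple poles $s_{i}-1$, running from $-\infty$ to $+\infty$ across each interval; hence there is exactly one root per interval, giving $t+1$ distinct simple roots, none equal to any $s_{i}-1$ (those being poles). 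A direct evaluation gives $g(-1)=c(\eta-1)>0$ for $\eta\geq2$, which places the leftmost root strictly below $-1$ and separates the item-3 eigenvalues from item 1 as well. Finally I would close by a dimension count: the three items contribute $(c+\sum_{i}\eta_{i}(s_{i}-1)-1)+(\eta-t)+(t+1)=n$ eigenvalues, and since the exhibited eigenvectors are mutually orthogonal (distinct eigenvalues, or disjoint and zero-sum supports inside the $\lambda=-1$ space) and $A$ is symmetric, they form a complete eigenbasis, so no further eigenvalues exist. The single genuine obstacle is this monotonicity-and-interlacing analysis of $g$ together with ruling out a coincidence with $-1$; everything else parallels Theorem \ref{thm:spectrum}.
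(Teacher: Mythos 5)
Your proof is correct. Items 1 and 2, and the derivation of the secular equation (\ref{eq:algebraic}) from the ansatz that is constant on the core and on each size class, are essentially identical to the paper's argument. Where you genuinely depart from the paper is in establishing that the remaining $t+1$ eigenvalues are simple and that the eigenvector list is complete. The paper works with the $(t+1)$-dimensional invariant subspace ${\cal S}$ spanned by vectors of the form (\ref{eq:ev7}) and rules out a multiple root by contradiction: a second, orthogonal eigenvector for the same root would have to lie in ${\cal S}$, hence be of the same form, hence coincide with the first. You instead analyze the rational function $g(\lambda)=\lambda-c+1-c\sum_{i}\eta_{i}s_{i}/(\lambda-s_{i}+1)$ directly: since $g'>0$ off the poles, $g$ has exactly one zero in each of the $t+1$ intervals determined by the poles $s_{i}-1$, and these zeros are precisely the roots of (\ref{eq:algebraic}) because no $s_{i}-1$ satisfies the polynomial equation (the term $-c\,\eta_{i}s_{i}\prod_{j\ne i}(s_{i}-s_{j})$ survives). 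Your route buys more than the paper's: it localizes the roots by interlacing them with the values $s_{i}-1$, and the evaluation $g(-1)=c(\eta-1)>0$ cleanly certifies that the item-3 eigenvalues collide neither with $-1$ nor with any $s_{i}-1$, a point the paper leaves implicit. Your closing dimension count ($n$ mutually orthogonal eigenvectors for a symmetric matrix) then replaces the paper's invariant-subspace argument for completeness. One small caveat: the separation of the leftmost root from $-1$ uses $\eta\ge2$; for $\eta=1$ the graph degenerates to a complete graph and $-1$ becomes a root of (\ref{eq:algebraic}), but that case is excluded by the standing assumptions on core-satellite graphs.
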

\begin{proof}
It is straightforward to verify that $A$ has $c-1$ linearly independent
eigenvectors of the form (\ref{eq:ev1}) corresponding to the eigenvalue
$\lambda=-1$. Likewise, it is easy to check that $A$ has $\sum_{i=1}^{t}(s_{i}-1)\eta_{i}$
linearly independent eigenvectors of the form (\ref{eq:ev2}) with
$s=s_{i}$, also corresponding to the eigenvalue $\lambda=-1$. Hence,
$A$ has (at least) $c+\sum_{i=1}^{t}\eta_{i}(s_{i}-1)-1=c+\sum_{i=1}^{t}\eta_{i}s_{i}-\eta-1$
linearly independent eigenvectors corresponding to the eigenvalue
$\lambda=-1$; since the total number of nodes is $n=c+\sum_{i=1}^{t}\eta_{i}s_{i}$,
there remain $\eta+1=\sum_{i=1}^{t}\eta_{i}+1$ eigenvalues to account
for.

Let us first assume that $\eta_{i}>1$ for some $i$. Consider a nonzero
vector of the form 
\begin{equation}
{\bf x}=\left[\begin{array}{c}
{\bf 0}\\
\hline {\bf 0}\\
\vdots\\[2pt]
{\bf 0}\\
\hline \alpha_{1}{\bf 1}_{s_{i}}\\
\alpha_{2}{\bf 1}_{s_{i}}\\
\vdots\\[2pt]
\alpha_{\eta_{i}}{\bf 1}_{s_{i}}\\
\hline {\bf 0}\\
\vdots\\[2pt]
{\bf 0}
\end{array}\right]\,,\qquad\alpha_{1}+\cdots+\alpha_{\eta_{i}}=0\,,\label{eq:ev5}
\end{equation}
where the (possible) nonzeros in ${\bf x}$ occur in the positions
that correspond to the diagonal block $A_{i}$ in $A$. Then one can
easily check that ${\bf x}$ is eigenvector of $A$ associated to
the eigenvalue $\lambda=s_{i}-1$, and from the condition on the $\alpha_{i}$
we immediately deduce that there are $\eta_{i}-1$ linear independent
vectors of the form (\ref{eq:ev5}).

There remain $t+1$ eigenvalues to account for, where $t$ is between
$1$ and $\eta$. When $t=1$ all the satellite cliques are identical
and we are back under the assumptions of Theorem \ref{thm:spectrum},
hence the remaining two eigenvalues are the roots of the quadratic
equation (\ref{eq:quadratic}), to which (\ref{eq:algebraic}) reduces
for $t=1$.

Assume $t>1$ and consider a vector of the form 
\begin{equation}
{\bf x}=\left[\begin{array}{c}
{\bf 1}_{c}\\[2pt]
\hline \\[-8pt]
\beta_{1}{\bf 1}_{s_{1}\eta_{1}}\\
\vdots\\[2pt]
\beta_{t}{\bf 1}_{s_{t}\eta_{t}}
\end{array}\right].\label{eq:ev7}
\end{equation}
We note that vectors of this form are automatically orthogonal to
all eigenvectors of the form (\ref{eq:ev1}) (with ${\bf x}_{1}\perp{\bf 1}_{c}$),
to all eigenvectors of the form (\ref{eq:ev2}) (with ${\bf x}_{i}\perp{\bf 1}_{s_{i}}$),
and to all eigenvectors of the form (\ref{eq:ev5}). Also, the span
${\cal S}$ of all vectors of the form (\ref{eq:ev7}) has dimension
$t+1$. Hence, it must be the invariant subspace of $A$ spanned by
the eigenvectors associated with the remaining $t+1$ eigenvalues
of $A$. Finally, we observe that any vector in ${\cal S}$ is also
of the form (\ref{eq:ev7}) up to a scalar multiple.

For such vectors, the equation $A{\bf x}=\lambda{\bf x}$ becomes
\begin{equation}
\left[\begin{array}{c}
(c-1){\bf 1}_{c}+\sum_{i=1}^{t}\beta_{i}\eta_{i}s_{i}{\bf 1}_{c}\\[2pt]
\hline \\[-8pt]
c{\bf 1}_{s_{1}\eta_{1}}+\beta_{1}(s_{1}-1){\bf 1}_{s_{1}\eta_{1}}\\
\vdots\\[2pt]
c{\bf 1}_{s_{t}\eta_{t}}+\beta_{t}(s_{t}-1){\bf 1}_{s_{t}\eta_{t}}
\end{array}\right]=\left[\begin{array}{c}
\lambda{\bf 1}_{c}\\[2pt]
\hline \\[-8pt]
\lambda\beta_{1}{\bf 1}_{s_{1}\eta_{1}}\\
\vdots\\[2pt]
\lambda\beta_{t}{\bf 1}_{s_{t}\eta_{t}}
\end{array}\right].\label{cond:7}
\end{equation}
Conditions (\ref{cond:7}) can be rewritten in the form 
\begin{equation}
c-1+\sum_{i=1}^{t}\beta_{i}\eta_{i}s_{i}=\lambda\,,\label{cond:8}
\end{equation}
together with 
\begin{equation}
c+\beta_{i}(s_{i}-1)=\lambda\beta_{i}\,,\qquad i=1,\ldots,t.\label{cond:9}
\end{equation}
Rearranging conditions (\ref{cond:9}) we obtain 
\begin{equation}
\beta_{i}=\frac{c}{\lambda-s_{i}+1}\,,\qquad i=1,\ldots,t.\label{cond:10}
\end{equation}
Substituting (\ref{cond:10}) into (\ref{cond:8}) we obtain 
\[
c-1+c\,\sum_{i=1}^{t}\frac{\eta_{i}s_{i}}{\lambda-s_{i}+1}=\lambda\,,
\]
or, equivalently, 
\[
\sum_{i=1}^{t}\frac{\eta_{i}s_{i}}{\lambda-s_{i}+1}=\frac{\lambda-c+1}{c}\,.
\]
The left-hand side of the last equation can be rewritten as 
\[
\frac{\sum_{i=1}^{t}\eta_{i}s_{i}\prod_{j\ne i}(\lambda-s_{j}+1)}{\prod_{i=1}^{t}(\lambda-s_{i}+1)}\,,
\]
finally leading to the equation 
\[
(\lambda-c+1)\prod_{i=1}^{t}(\lambda-s_{i}+1)=c\sum_{i=1}^{t}\eta_{i}s_{i}\prod_{j\ne i}(\lambda-s_{j}+1)\,,
\]
which is precisely (\ref{eq:algebraic}). Next, we will show that
the $t+1$ roots of this equation are all distinct, therefore each
root $\lambda_{i}$ yields a distinct value of $\beta_{i}$ via (\ref{cond:10})
and therefore a distinct eigenvector (\ref{eq:ev7}), thus completing
the proof. For suppose that there is a root $\bar{\lambda}$ of (\ref{eq:algebraic})
of multiplicity at least two. Then the vector 
\[
{\bf x}=\left[\begin{array}{c}
{\bf 1}_{c}\\[2pt]
\hline \\[-8pt]
\bar{\beta}_{1}{\bf 1}_{s_{1}\eta_{1}}\\
\vdots\\[2pt]
\bar{\beta}_{t}{\bf 1}_{s_{t}\eta_{t}}
\end{array}\right],\qquad\bar{\beta}_{i}=\frac{c}{\bar{\lambda}-s_{i}+1}\,,\qquad i=1,\ldots,t,
\]
is an eigenvector of $A$ associated with $\bar{\lambda}$. But then
there must be another eigenvector ${\bf y}$ of $A$ associated with
the same eigenvalue $\bar{\lambda}$ and orthogonal to ${\bf x}$.
As argued above, this eigenvector ${\bf y}$ must lie in the subspace
${\cal S}$, and therefore be again of the form (\ref{eq:ev5}). But
when we impose ${\bf y}$ to be of the form (\ref{eq:ev5}) and to
satisfy $A{\bf y}=\bar{\lambda}{\bf y}$ we are forced to conclude
that ${\bf y}={\bf x}$, contradicting the orthogonality requirement.
Hence, all roots of (\ref{eq:algebraic}) must be simple. In particular,
when all the cliques have different size ($\eta_{i}=1$ for all $i=1,\ldots,t$)
there are $\eta+1$ eigenvalues given by the roots of (\ref{eq:algebraic}),
and no eigenvalues of the form $\lambda=s_{i}-1$. This completes
the proof. \end{proof}
\begin{rem}
It is well known (Abel--Ruffini Theorem) that the solution by radicals
of the general algebraic equation of degree 5 or higher is impossible.
We do not know if the equation (\ref{eq:algebraic}) is solvable by
radicals for any values of $s_{i}$, $\eta_{i}$, and $t\ge4$. For
$t=2$ and $t=3$ equation (\ref{eq:algebraic}) can be solved by
radicals, but the roots are given by rather complicated expressions
(Cardano's and Ferrari's formulas). Hence, we do not attempt to explicitly
write down the roots of (\ref{eq:algebraic}). 
\end{rem}

\begin{rem}
The number of distinct eigenvalues of $\Theta(c,{\bf s},{\boldsymbol{\eta}})$
is given by 
\[
t+2+|\{i\,|\,\eta_{i}>1\}|\,,
\]
where $|X|$ is the cardinality of the set $X$. This quantity is
equal to $t+2$ (or $\eta+2$, equivalently) when all the $t$ satellite
cliques have different sizes. \end{rem}
\begin{thm}
\label{thm:bounds} The spectral radius (Perron eigenvalue) $\rho(A)$
of $A$ is given by the largest root of (\ref{eq:algebraic}) and
satisfies the bounds 
\begin{equation}
c-1+\max_{1\le i\le t}s_{i}<\rho(A)<c-1+\sum_{i=1}^{t}\eta_{i}s_{i}\,.\label{eq:bounds}
\end{equation}
\end{thm}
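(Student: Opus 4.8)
The plan is to read equation (\ref{eq:algebraic}) off its equivalent rational form and combine it with the Perron--Frobenius theorem. Dividing (\ref{eq:algebraic}) by $c\prod_{i=1}^{t}(\lambda-s_i+1)$ shows that its non-pole roots are exactly the zeros of the secular function
\[
h(\lambda):=\frac{\lambda-c+1}{c}-\sum_{i=1}^{t}\frac{\eta_i s_i}{\lambda-s_i+1}.
\]
Since the $s_i$ are pairwise distinct, $h$ has $t$ simple poles at the points $s_i-1$, and everywhere it is defined $h'(\lambda)=\frac1c+\sum_{i=1}^{t}\frac{\eta_i s_i}{(\lambda-s_i+1)^2}>0$, so $h$ is strictly increasing on each of the $t+1$ intervals cut out by the poles. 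Writing $s_{\max}=\max_i s_i$ and letting $i_0$ be the unique index with $s_{i_0}=s_{\max}$, on the rightmost interval $(s_{\max}-1,\infty)$ the function $h$ increases from $-\infty$ to $+\infty$ and therefore has a single zero $\lambda^{\ast}$, which is the largest root of (\ref{eq:algebraic}).

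Next I would identify $\lambda^{\ast}$ with $\rho(A)$. As $G$ is connected, $A$ is irreducible and nonnegative, so Perron--Frobenius guarantees that $\rho(A)$ is a simple eigenvalue carrying a strictly positive eigenvector. Scanning the eigenvector families built in the proof of Theorem \ref{thm:spectrum2}, the vectors of the forms (\ref{eq:ev1}), (\ref{eq:ev2}) and (\ref{eq:ev5}) either vanish on whole blocks or are orthogonal to a constant vector on a block, hence have zero or sign-changing entries; only vectors of the form (\ref{eq:ev7}) can be taken everywhere positive. Consequently $\rho(A)$ is one of the roots of (\ref{eq:algebraic}), and since every such root is an eigenvalue while $\rho(A)$ is the largest eigenvalue, we must have $\rho(A)=\lambda^{\ast}$, the largest root.

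For the bounds I would simply evaluate $h$ at the two proposed endpoints and use monotonicity on $(s_{\max}-1,\infty)$. At $\lambda=c-1+s_{\max}$ the $i_0$ term contributes $\eta_{i_0}s_{\max}/c$, so
\[
h(c-1+s_{\max})=\frac{(1-\eta_{i_0})\,s_{\max}}{c}-\sum_{i\ne i_0}\frac{\eta_i s_i}{c+s_{\max}-s_i}\le 0,
\]
strictly once $\eta=\sum_i\eta_i\ge 2$; since $h$ is increasing with $h(\lambda^{\ast})=0$, this yields $\lambda^{\ast}>c-1+s_{\max}$. At the right endpoint $\lambda=c-1+\sum_i\eta_i s_i=n-1$ one has $n-s_i=c+(\eta_i-1)s_i+\sum_{j\ne i}\eta_j s_j\ge c$, with strict inequality for every $i$ as soon as $\eta\ge 2$; hence each summand of $\sum_i \eta_i s_i/(n-s_i)$ lies below $\eta_i s_i/c$ and $h(n-1)>0$, giving $\lambda^{\ast}<n-1$. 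Together these prove (\ref{eq:bounds}).

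A shorter, purely graph-theoretic route to the same bounds, worth recording as a remark, is the following: the core clique together with any single satellite of maximal size induces a copy of $K_{c+s_{\max}}$, whose Perron root is $c+s_{\max}-1$, and the spectral radius of a proper principal submatrix of an irreducible nonnegative matrix is strictly smaller than that of the whole matrix; for the upper bound the largest degree equals $n-1$ (attained at the core nodes), while a connected non-regular graph has spectral radius strictly below its maximum degree. In either approach the one point that needs care is strictness: both inequalities degenerate to equalities precisely in the excluded case $t=1$, $\eta_1=1$, where the graph is the complete graph $K_{c+s}$, so the main (if modest) obstacle is to verify that the presence of at least two satellite cliques, $\eta\ge 2$, makes every inequality strict.
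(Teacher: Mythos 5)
Your proof is correct, but it follows a genuinely different route from the paper's. The paper argues matrix-analytically: the upper bound comes from the classical row-sum bounds for an irreducible nonnegative matrix with non-constant row sums (the maximum row sum being the core degree $n-1$), and the lower bound comes from exhibiting a nonnegative test vector ${\bf x}$ (equal to ${\bf 1}$ on the core and on the $\eta_1$ largest satellites, zero elsewhere) with $A{\bf x}\ge(c-1+s_{\max}){\bf x}$, which gives $\rho(A)\ge c-1+s_{\max}$; strictness is then obtained by checking that $c-1+s_{\max}$ is not a root of (\ref{eq:algebraic}). You instead work entirely with the secular function $h$ obtained from (\ref{eq:algebraic}): monotonicity between the poles $s_i-1$ localizes all $t+1$ roots, and both bounds follow from sign evaluations of $h$ at the two endpoints. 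You also make explicit a step the paper leaves implicit, namely why $\rho(A)$ equals the largest root of (\ref{eq:algebraic}): your Perron--Frobenius argument (only eigenvectors of the form (\ref{eq:ev7}) can be entrywise positive, and the Perron eigenvalue is simple) is a clean way to settle this, whereas in the paper it follows only a posteriori from the lower bound exceeding all eigenvalues of the forms $-1$ and $s_i-1$. Your approach yields strictly more information (an interlacing-type localization of every root of (\ref{eq:algebraic}) in the gaps between consecutive $s_i-1$), while the paper's is shorter and does not require analyzing the secular function. Your closing observation is also worth keeping: both arguments genuinely need $\eta=\sum_i\eta_i\ge2$ (otherwise the graph is $K_{c+s_1}$, the row sums are constant, $h(c-1+s_{\max})=0$, and both inequalities in (\ref{eq:bounds}) collapse to equalities), a hypothesis the paper carries over implicitly from the definition of core-satellite graphs.
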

\begin{proof}
We begin by recalling the well known inequalities 
\begin{equation}
\min_{1\le i\le n}\,\sum_{j=1}^{n}a_{ij}<\rho(A)<\max_{1\le i\le n}\,\sum_{j=1}^{n}a_{ij},\label{eq:bound1}
\end{equation}
which hold for the spectral radius of any nonnegative irreducible
matrix with non-constant row sums \cite[Lemma 2.5]{Varga}. In our
case the upper bound is the maximum degree, which is attained by each
node in the core clique and is equal to $n-1=c-1+\sum_{i=1}^{t}\eta_{i}s_{i}$.
The lower bound in (\ref{eq:bound1}) yields $\rho(A)>c+\min_{1\le i\le t}s_{i}-1$,
which in general is not enough to conclude that $\rho(A)$ is given
by the largest root of (\ref{eq:algebraic}) by Theorem \ref{thm:spectrum2}.
However, we can improve on this lower bound as follows. Assume (without
loss of generality) that $s_{1}=\max_{1\le i\le t}s_{i}$ and consider
a vector of the form 
\[
{\bf x}=\left[\begin{array}{c}
{\bf 1}_{c}\\[2pt]
\hline \\[-8pt]
{\bf 1}_{s_{1}\eta_{1}}\\
{\bf 0}\\
\vdots\\[2pt]
{\bf 0}
\end{array}\right],
\]
then 
\[
A{\bf x}=\left[\begin{array}{c}
(c-1+s_{1}\eta_{1}){\bf 1}_{c}\\[2pt]
\hline \\[-8pt]
(c-1+s_{1}){\bf 1}_{s_{1}\eta_{1}}\\
c{\bf 1}_{s_{2}\eta_{2}}\\
\vdots\\[2pt]
c{\bf 1}_{s_{t}\eta_{t}}
\end{array}\right]\ge(c-1+s_{i}){\bf x}\,,\qquad\forall i=1,\ldots,t.
\]
Hence, we have found a nonnegative vector ${\bf x}$ such that $A{\bf x}\ge r{\bf x}$
for $r=c-1+\max_{1\le i\le t}s_{i}$. By a well known result (see
\cite[section 3]{MS90}) this implies that $\rho(A)\ge r$. 
Finally, it
is straightforward to check that $\lambda=c-1+s_{1}$ is not a root
of (\ref{eq:algebraic}), therefore it must be $\rho(A)>c-1+s_{1}=c-1+\max_{1\le i\le t}s_{i}$,
proving the lower bound in (\ref{eq:bounds}). \end{proof}
\begin{rem}
\label{rm:centr} It follows that the principal eigenvector of $A$
is of the form (\ref{eq:ev7}). Since this eigenvector is constant
on each of the cliques forming $\Theta(c,{\bf s},{\boldsymbol{\eta}})$,
the eigenvector centrality of each node is the same for all the nodes
within a given clique, as one would expect. Furthermore, the nodes
in the core clique have higher centrality than the nodes in any of
the satellite cliques (assuming of course there is more than one of
them), which is also to be expected. In other words, in (\ref{eq:ev5})
we have that $0<\beta_{i}<1$ for all $i=1,\ldots,t$. To see this,
note that the $\beta_{i}$ are given by 
\[
\beta_{i}=\frac{c}{\rho(A)-s_{i}+1}\qquad(i=1,\ldots,t),
\]
see (\ref{cond:10}). By Theorem \ref{thm:bounds} we have $\rho(A)>c-1+s_{i}$
for all $i=1,\ldots,t$, hence $\rho(A)-s_{i}+1>c$ and therefore
$0<\beta_{i}<1$ for all $i=1,\ldots,t$. \end{rem}
\begin{thm}
The spectral radius of a generalized core-satellite graph goes to
infinity as the graph size $n$ goes to infinity. \end{thm}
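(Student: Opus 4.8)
The plan is to leverage the lower bound already established in Theorem \ref{thm:bounds}, namely
\[
\rho(A)>c-1+\max_{1\le i\le t}s_{i}\,,
\]
together with the observation that for a generalized core-satellite graph the size is $n=c+\sum_{i=1}^{t}\eta_{i}s_{i}$. The statement ``the graph size $n$ goes to infinity'' must mean that at least one of the defining parameters grows without bound, so I would begin by identifying which parameter is responsible and handling the cases accordingly.

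First I would split into cases based on how $n\to\infty$ is realized. If $c\to\infty$, then the trivial lower bound $\rho(A)>c-1+\max_i s_i\ge c-1$ already forces $\rho(A)\to\infty$, since $s_i\ge1$. If instead $c$ stays bounded but $\max_{1\le i\le t}s_i\to\infty$ (some satellite clique grows arbitrarily large), then again the same lower bound gives $\rho(A)>c-1+\max_i s_i\to\infty$. The only remaining possibility is that both $c$ and all the $s_i$ stay bounded while $n\to\infty$, which can happen only if $\sum_{i=1}^{t}\eta_i s_i\to\infty$, i.e.\ the number of satellite cliques (counted with multiplicity, through the $\eta_i$) grows. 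This is the genuinely interesting case and the one I expect to require the sharper lower bound rather than the crude $c-1+\max_i s_i$ one.

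For that last case the natural approach is to return to the characterization of $\rho(A)$ as the largest root of the algebraic equation (\ref{eq:algebraic}), or equivalently to use a Rayleigh-quotient / test-vector argument in the spirit of the proof of Theorem \ref{thm:bounds}. The cleanest route is to exhibit a nonnegative test vector whose associated lower bound involves $\sum_i \eta_i s_i$ and hence diverges. Taking the all-ones vector ${\bf 1}_n$ and using $\rho(A)\ge {\bf 1}_n^T A\,{\bf 1}_n / {\bf 1}_n^T{\bf 1}_n = 2m/n$ (twice the average degree) would work, since the degree of each core node is $n-1$ and there are $c\ge1$ such nodes, so $2m/n\ge c(n-1)/n\to\infty$ as $n\to\infty$. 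Alternatively, since $\rho(A)$ is at least the average degree and the $c$ core nodes each have degree $n-1$, even a single core node already drags the average up. I would present this averaged-degree bound as it handles all three cases uniformly and sidesteps any case analysis.

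The main obstacle, such as it is, is purely one of interpretation: the hypothesis ``$n\to\infty$'' is a statement about a family of graphs, so I must first pin down precisely which parameters are varying before any inequality is meaningful. Once that is settled, the mathematics is immediate from results already proved. Concretely, I would argue: for any generalized core-satellite graph the $c\ge1$ core nodes have degree $n-1$, so the average degree satisfies $\bar d = 2m/n \ge c(n-1)/n \ge (n-1)/n$, whence $\rho(A)\ge \bar d \ge (n-1)/n \cdot c$; more simply, $\rho(A)\ge 2m/n > (n-1)/n \to \infty$ is not quite right since that limit is $1$, so I would instead use $\rho(A) > c - 1 + \max_i s_i$ when $c$ or some $s_i$ diverges and the average-degree bound $\rho(A)\ge 2m/n$ with $2m \ge c(n-1)$ giving $\rho(A)\ge c(n-1)/n$ when $\eta$ diverges with $c$ fixed---here if additionally $c\ge 2$ this already diverges, and for $c=1$ one notes the single core node has degree $n-1$ so $\rho(A)\ge \sqrt{n-1}$ by the standard bound $\rho(A)\ge\sqrt{\Delta}$ relating the spectral radius to the maximum degree $\Delta$. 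Thus in every case $\rho(A)\to\infty$, completing the proof.
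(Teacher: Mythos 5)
Your overall strategy (a case analysis on which parameter drives $n\to\infty$) can be made to work, but as written one branch rests on a false claim. In the regime where $c\ge 2$ is fixed, all the $s_i$ stay bounded, and $n\to\infty$ because $\sum_{i}\eta_i s_i\to\infty$, you assert that the average-degree bound $\rho(A)\ge 2m/n\ge c(n-1)/n$ ``already diverges.'' It does not: $c(n-1)/n\to c$, a constant. In fact the average degree of a generalized core-satellite graph with bounded core and bounded satellite sizes remains bounded as the number of satellites grows (for the friendship graphs $\Theta(1,2,\eta)$, for instance, $2m/n=6\eta/(2\eta+1)\to 3$), so no Rayleigh-quotient argument with the test vector ${\bf 1}_n$ can close this case. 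The same objection defeats your earlier assertion that the averaged-degree bound ``handles all three cases uniformly'': the bounded-$c$, bounded-$s_i$ regime is precisely the delicate one, and your write-up leaves it unproved for $c\ge 2$.

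The repair is already present in your own text, but you deploy it only for $c=1$: every core node has degree $n-1$, so the graph contains $K_{1,n-1}$ as a subgraph and hence $\rho(A)\ge\sqrt{\Delta}=\sqrt{n-1}\to\infty$, for every $c$ and with no case analysis at all. This is exactly the paper's proof: it notes that $A\ge\hat A$ entrywise, where $\hat A$ is the adjacency matrix of the spanning star centered at a core node, and uses monotonicity of the spectral radius for nonnegative matrices to conclude $\rho(A)\ge\rho(\hat A)=\sqrt{n-1}$. If you discard the three-way split and apply your $\rho(A)\ge\sqrt{\Delta}$ bound unconditionally, your argument becomes correct and coincides with the paper's.
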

\begin{proof}
Let $\hat{A}$ be the adjacency matrix of the star graph with $n$
nodes, with the central node numbered first. The spectrum of $\hat{A}$
consists of the eigenvalue 0 of multiplicity $n-2$, plus the eigenvalues
$\pm\sqrt{n-1}$. In particular, $\rho(\hat{A})=\sqrt{n-1}$. If $A$
is the adjacency matrix of a generalized core-satellite graph with
$n$ nodes, it is straightforward that $A\ge\hat{A}$, where the inequality
is component-wise. It is well known \cite[page 520]{HJ} that this
implies $\rho(A)\ge\rho(\hat{A})$, hence we conclude that the spectral
radius of $A$ tends to infinity as the graph size tends to infinity. \end{proof}
\begin{rem}
This result implies that in any generalized core-satellite graph,
the infection threshold $\tau$ vanishes as the graph size grows,
see Remark \ref{rm:four}. Again, since the diameter of a generalized
core-satellite graph is constant and equal to 2, this is to be expected. 
\end{rem}

We conclude this section with the description of the spectrum of the
graph Laplacian $L$ of $\Theta(c,{\bf s},{\boldsymbol{\eta}})$.
Letting $\gamma=\sum_{i=1}^{t}\eta_{i}s_{i}$ and 
\[
L_{i}=\underbrace{L(K_{s_{i}})\oplus\cdots\oplus L(K_{s_{i}})}_{\eta_{i}{\rm \;times}},
\]
the graph Laplacian of $\Theta(c,{\bf s},{\boldsymbol{\eta}})$ can
be written as 
\begin{equation}
L=\left[\begin{array}{c|cccc}
L(K_{c})+\gamma I_{c} & -B_{1} & -B_{2} & \cdots & -B_{t}\\[2pt]
\hline \\[-5pt]
-B_{1}^{T} & L_{1}+cI_{\eta_{1}s_{1}}\\[2pt]
-B_{2}^{T} &  & L_{2}+cI_{\eta_{2}s_{2}}\\
\vdots &  &  & \ddots\\[2pt]
-B_{t}^{T} &  &  &  & L_{k}+cI_{\eta_{t}s_{t}}
\end{array}\right]\,.\label{eq:Lap2}
\end{equation}

We have the following result. 
\begin{thm}
\label{thm:Lspectrum2} The spectrum of the graph Laplacian associated
with the generalized core-satellite graph $\Theta(c,{\bf s},{\boldsymbol{\eta}})$
consists of 
\begin{enumerate}
\item The eigenvalue $\mu=c+\sum_{i=1}^{t}\eta_{i}s_{i}=n$ with multiplicity
$c$; 
\item The eigenvalue $\mu=c+s_{i}$ with multiplicity $\eta_{i}(s_{i}-1)$,
for $1\le i\le t$; 
\item The eigenvalue $\mu=c$ with multiplicity $\sum_{i-1}^{t}\eta_{i}-1=\eta-1$; 
\item The eigenvalue $\mu=0$ with multiplicity one. 
\end{enumerate}
\end{thm}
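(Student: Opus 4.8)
The plan is to follow the same strategy as in the proof of Theorem~\ref{thm:spectrum2}, but using the spectral data of the complete-graph Laplacian in place of that of $A(K_s)$. Since $L(K_s)=sI_s-{\bf 1}_s{\bf 1}_s^{T}$, it has the simple eigenvalue $0$ (eigenvector ${\bf 1}_s$) and the eigenvalue $s$ with multiplicity $s-1$ on the orthogonal complement ${\bf 1}_s^{\perp}$. Reading off the block form (\ref{eq:Lap2}), I would test three mutually orthogonal families of trial vectors: those supported on (and orthogonal to the all-ones vector of) a single satellite clique, those supported on and orthogonal to ${\bf 1}_c$ in the core, and those that are constant on the core and on each individual satellite clique.

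First, for a vector supported on one satellite clique of group $i$ and orthogonal to ${\bf 1}_{s_i}$ there---that is, of the form (\ref{eq:ev2}) with $s=s_i$---the off-diagonal coupling drops out of $L{\bf x}$ since ${\bf 1}_{s_i}^{T}{\bf x}_i=0$ and the core component is zero, while the corresponding diagonal block acts as $L(K_{s_i})+cI=(s_i+c)I$ on ${\bf 1}_{s_i}^{\perp}$. This yields the eigenvalue $\mu=c+s_i$ with multiplicity $\eta_i(s_i-1)$, giving item 2. Likewise, for a vector of the form (\ref{eq:ev1}) with ${\bf x}_1\perp{\bf 1}_c$ the blocks $-B_i^{T}{\bf x}_1$ all vanish and $L(K_c)+\gamma I_c$ acts as $(c+\gamma)I=nI$ on ${\bf 1}_c^{\perp}$; this produces the eigenvalue $\mu=n$, but only with multiplicity $c-1$.

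The remaining $\eta+1$ eigenvalues live in the subspace ${\cal W}$ of vectors that are constant on the core and constant on each individual satellite clique; one checks directly that ${\cal W}$ (of dimension $\eta+1$) is invariant under $L$, since $L(K_s){\bf 1}_s={\bf 0}$ and every coupling term produces a constant on the blocks it hits. Labelling the $\eta$ satellite cliques $1,\dots,\eta$ with sizes $\sigma_1,\dots,\sigma_\eta$ (so that $\sum_{k}\sigma_k=\gamma$) and writing a generic element of ${\cal W}$ in coordinates $(a,b_1,\dots,b_\eta)$ (core value $a$, value $b_k$ on clique $k$), the action of $L$ on ${\cal W}$ becomes the $(\eta+1)\times(\eta+1)$ reduced matrix whose core row sends $(a,b_1,\dots,b_\eta)$ to $\gamma a-\sum_{k=1}^{\eta}\sigma_k b_k$ and whose $k$th clique row sends it to $c\,(b_k-a)$. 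On this reduced matrix the all-ones vector is an eigenvector for $\mu=0$ (item 4); imposing $\mu=c$ forces $a=0$ together with the single constraint $\sum_{k=1}^{\eta}\sigma_k b_k=0$, hence multiplicity $\eta-1$ (item 3); and, since the trace of the reduced matrix is $\gamma+c\eta$, the one remaining eigenvalue equals $\gamma+c\eta-c(\eta-1)=c+\gamma=n$, with eigenvector constant on the core and equal on every clique.

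Finally I would verify that the three families are mutually orthogonal and that the multiplicities sum to $c+\bigl(\gamma-\eta\bigr)+(\eta-1)+1=n$, so the description is complete. The one delicate point---and the step a mechanical transcription of the adjacency proof would most easily miss---is the multiplicity of $\mu=n$: the core vectors in ${\bf 1}_c^{\perp}$ supply only $c-1$ of them, and the $c$th eigenvector for $\mu=n$ comes instead from the low-dimensional subspace ${\cal W}$. Pinning down that extra eigenvector (cleanly via the trace of the reduced matrix, or by exhibiting the vector with core value $\gamma$ and value $-c$ on every clique) is the crux of the argument.
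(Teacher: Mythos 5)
Your proof is correct and follows essentially the same route as the paper: the same three families of trial vectors (satellite-supported vectors orthogonal to ${\bf 1}_{s_i}$ giving $\mu=c+s_i$, core-supported vectors orthogonal to ${\bf 1}_c$ giving $c-1$ copies of $\mu=n$, and the $(\eta+1)$-dimensional space of vectors constant on the core and on each clique). The only difference is presentational: where you recover the $c$th eigenvector for $\mu=n$ from the trace of the reduced $(\eta+1)\times(\eta+1)$ matrix, the paper simply exhibits it explicitly (core value $1$, value $\beta=-c/\sum_i\eta_i s_i$ on every satellite clique), which is the same vector, up to scaling, as the one you mention at the end.
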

\begin{proof}
It is straightforward to verify that $L$ has $c-1$ linearly independent
eigenvectors of the form (\ref{eq:ev1}) with ${\bf x}_{1}\perp{\bf 1}_{c}$
associated to the eigenvalue $\mu=c+\sum_{i=1}^{t}\eta_{i}s_{i}$.
In addition, the vector 
\[
{\bf x}=\left[\begin{array}{c}
{\bf 1}_{c}\\[2pt]
\hline \\[-8pt]
\beta{\bf 1}_{\eta_{1}s_{1}}\\
\vdots\\[2pt]
\beta{\bf 1}_{s_{t}\eta_{t}}
\end{array}\right],\qquad\beta=-\frac{c}{\sum_{i=1}^{t}\eta_{i}s_{i}}\,,
\]
is also eigenvector of $L$ associated with the eigenvalue $\mu=c+\sum_{i=1}^{t}\eta_{i}s_{i}=n$.

Next, it is easily checked that every nonzero vector of the form (\ref{eq:ev2})
with ${\bf x}_{i}\perp{\bf 1}_{s_{i}}$ is an eigenvector of $L$
associated with the eigenvalue $\mu=c+s_{i}$, and there are exactly
$\eta_{i}(s_{i}-1)$ linearly independent eigenvectors of that form.

Consider now vectors of the form 
\begin{equation}
{\bf x}=\left[\begin{array}{c}
{\bf 0}\\[2pt]
\hline \\[-8pt]
{\bf x}_{1}\\
\vdots\\[2pt]
{\bf x}_{t}
\end{array}\right]\,, \qquad{\rm where}\qquad{\bf x}_{i}=\left[\begin{array}{c}
\alpha_{1i}{\bf 1}_{s_{i}}\\
\alpha_{2i}{\bf 1}_{s_{i}}\\
\vdots\\[2pt]
\alpha_{\eta_{i}i}{\bf 1}_{s_{i}}
\end{array}\right] \qquad(i=1,\ldots,t).\label{eq:ev8}
\end{equation}
Then the equality $L{\bf x}=\mu{\bf x}$ is satisfied for $\mu=c$
if and only if 
\begin{equation}
\sum_{i=1}^{t}s_{i}\sum_{j=1}^{\eta_{i}}\alpha_{ji}=0\,.\label{eq:cond10}
\end{equation}
Equation (\ref{eq:cond10}) is a homogeneous linear equation in the
unknowns $\alpha_{ji}$, with $1\le i\le t$ and $1\le j\le\eta_{i}$.
Since the number of unknowns is $\eta_{1}+\cdots+\eta_{t}=\eta$,
equation (\ref{eq:cond10}) admits $\eta-1$ linearly independent
solutions, each leading to a distinct eigenvector (\ref{eq:ev8})
of $L$ associated with $c$.

Finally, since $\Theta(c,{\bf s},{\boldsymbol{\eta}})$ is connected,
$L$ has the simple eigenvalue zero associated with the eigenvector
with constant entries. \end{proof}
\begin{rem}
The Laplacian eigenvalues of $\Theta(c,{\bf s},{\boldsymbol{\eta}})$
can also be obtained from general results about the Laplacian eigenvalues
of the join of two (or more) graphs and the fact that the Laplacian
eigenvalues of a complete graph with $p$ nodes are known (they are
$\mu=p$ with multiplicity $p-1$ and $\mu=0$ with multiplicity 1);
see, e.g., \cite[Theorem 2.20]{Merris94}. The above proof has the
advantage of being self-contained and of explicitly exhibiting the
eigenvectors of $L$. 
\end{rem}

\begin{rem}
It follows from Theorem \ref{thm:Lspectrum2} that the Laplacian of
a generalized core-satellite graph $\Theta(c,{\bf s},{\boldsymbol{\eta}})$,
with $t>1$, has exactly $t+3$ distinct eigenvalues. For a core-satellite
graph $\Theta(c,s,\eta)$ (where $t=1$ and $\eta_{1}=\eta>1$), the
number of distinct eigenvalues is four: $\mu=0$, $\mu=c$, $\mu=c+s$,
and $\mu=c+\eta s$ (see also Remark \ref{rm:four}). 
\end{rem}

\begin{rem}
We observe that (generalized) core-satellite graphs are {\em Laplacian-integral},
i.e., the Laplacian eigenvalues are all integers. Moreover, the {\em
algebraic connectivity} (smallest nonzero Laplacian eigenvalue) of
a generalized core-satellite graph is equal to $c$, the size of the
core clique, and is independent of the number or size of the satellite
cliques. This is to be expected, since generalized core-satellite
graphs can be disconnected by disconnecting the core clique. 
\end{rem}

\begin{rem}
The ratio between the smallest nonzero eigenvalue of $L$ and the
largest eigenvalue of $L$ is known as the {\em synchronization
index} of the network: $Q=\mu_{n-1}/\mu_{1}$. Theorem \ref{thm:Lspectrum2}
implies that 
\[
Q=\frac{c}{c+\sum_{i=1}^{t}\eta_{i}s_{i}}=\frac{c}{n}\,.
\]
In particular, for windmill graphs we recover the fact that $Q=1/n$,
as already observed in \cite{clustering divergence}. Therefore, {\em
any two core-satellite graphs of the same size $n$ and with the same
core clique $K_{c}$ have exactly the same synchronization index},
a somewhat surprising result. Moreover, if $c$ is fixed and $n$
grows, the synchronization index decreases. This implies that unless
the core size $c$ grows and the number and size of satellite cliques
remains constant (or bounded), large (generalized) core-satellite
graphs are bad synchronizers. We refer to \cite{Chen2008,Chen2015,BaraPec02}
for detailed discussions of network synchronizability. 
\end{rem}
\vspace{0.1in}

\section{Conclusions }

Real-world graphs have a great variety of structures and complexities.
Hence, the existence of different mathematical models---Erd\H{o}s-R\'enyi,
Barab\'asi-Albert, Watts-Strogatz, random geometric graphs, etc. However,
not all of the structural properties of real-world graphs are
captured by these models. One simple example is the local-global clustering
coefficient divergence. This structural effect is simply due to the fact
that in some networks the local clustering tends to the maximum while
the global one tends to the minimum when the size of the graphs grows
to infinity. Here we have proposed a general class of graphs for which
this clustering divergence is observed and can be studied analytically.
These graphs---here proposed to be called core-satellite graphs---are
characterized by a central core of nodes which are connected to a
few satellites which may be of the same or different sizes. In this
work we have investigated some general properties of these graphs,
e.g., clustering coefficients, assortativity coefficients, as well
as the eigenstructure of both the adjacency and the Laplacian matrices.
Core-satellite graphs can also be easily modified so as to model other
properties of complex networks, such as hierarchical structure \cite{hierarchical}.
All of these make core-satellite graphs a flexible model for certain
classes of real-world networks, opening some new possibilities for
the analytic modeling of these systems.\\

\textbf{\Large{}{}{}Acknowledgements}{\Large \par}

 EE thanks the Royal Society of London for a Wolfson Research Merit
Award. MB was supported in part by National Science Foundation grant
DMS-1418889.

\end{document}